\documentclass[]{amsart}
\usepackage{graphicx} % Required for inserting images
\usepackage{amsthm, amsfonts, amsmath, amssymb}
\usepackage[table]{xcolor}

\usepackage{mathtools}
\DeclarePairedDelimiter\bra{\langle}{\rvert}
\DeclarePairedDelimiter\ket{\lvert}{\rangle}
\DeclareMathOperator{\wt}{wt}
\DeclareMathOperator{\Tr}{Tr}
\DeclareMathOperator{\supp}{supp}
\DeclareMathOperator{\Span}{span}
\DeclareMathOperator{\rank}{rank}
\DeclareMathOperator{\diag}{diag}

\newcommand{\norm}[1]{\left\lVert#1\right\rVert}

\newtheorem{theorem}{Theorem}

\newtheorem{corollary}[theorem]{Corollary}
\newtheorem{example}[theorem]{Example}
\newtheorem{remark}[theorem]{Remark}

\title[EA Codes Outside the Stabilizer Framework]{Entanglement-Assisted Codes Outside\\the Stabilizer Framework}
\author[J.~Defranco and A.~Nemec]{Jaszmine~Defranco$^*$ and Andrew~Nemec$^\dagger$}
\address{Department of Computer Science, University of Texas at Dallas, 890 Franklyn Jenifer Drive, Richardson, TX 75080 USA}
\email{$^*$jaszmine.defranco@utdallas.edu}
\email{$^\dagger$andrew.nemec@utdallas.edu}
\begin{document}

\begin{abstract}
    We show how entanglement-assisted codes can be constructed from arbitrary quantum codes by associating them with quantum codes for erasure channels.
    If a subset of physical qubits is correctable for an erasure error, then it naturally forms the receiver's share of a bipartite state that can be used for entanglement-assisted communications, both in the noiseless and noisy ebit error models.
    In the case of degenerate codes, we show that the receiver's share of the bipartite state can sometimes be compressed, at the cost of potentially reduced error-correction ability in the noisy ebit error model.
    We also give examples of permutation-invariant and XP-stabilizer entanglement-assisted codes, the first outside of the stabilizer and codeword-stabilized frameworks.
\end{abstract}

\maketitle

\section{Introduction}

Quantum error-correcting codes~\cite{Sho1995,Ste1996,CS1996,Got1997,CRSS1998} are necessary for the faithful transmission of quantum information across noisy channels, allowing for reliable quantum communication and quantum computation.
Ever since the first quantum code was demonstrated by Shor~\cite{Sho1995}, quantum coding theory has progressed at a rapid pace, especially after the introduction of the stabilizer formalism~\cite{Got1997,CRSS1998}.
This framework constructs quantum codes using self-orthogonal classical codes, allowing quantum coding theorists to make use of the extensive existing research on classical coding theory.

One important channel in quantum communications is the quantum erasure channel introduced by Grassl et al.~\cite{GBP1997}, in which an arbitrary error affects a known subset of qubits.
A quantum code with minimum distance $d$ is able to correct $d-1$ erasures, compared to $\left\lfloor\left(d-1\right)/2\right\rfloor$ arbitrary errors on unknown subsets of qubits.
The possibility of exploiting this stronger error-correction ability has led to an explosion of recent work on erasure conversion~\cite{WKPT22,KHVLBR2023,KCB23}, in which the physical qubits are chosen in such a way that physical noise causes a detectable leakage outside the qubit subspace; erasure conversion has been implemented using various platforms~\cite{SSTFCE2023,MLPZJCBPPT2023,LHHA+2024,GVRK2025}.
Additionally, codes for quantum erasures arise naturally in quantum cryptography, particularly in quantum secret sharing schemes~\cite{CGL1999,Got2000}, as well as in black hole theory, where connections between the AdS/CFT correspondence, quantum error correction, and quantum secret sharing have been studied~\cite{ADH2015,Har2017,HP2019,AR2019,AP2022}.

In a quantum world, a sender transmitting qubits across a noisy channel to a receiver is not the most general type of protocol.
Some early quantum communications protocols such as superdense coding~\cite{BW1992} and quantum teleportation~\cite{BBCJPW1993} are ``entanglement-assisted'' (EA), in that the sender and the receiver share some number of noiseless entangled qubits (ebits) before the protocol begins.
Bowen first introduced EA quantum error correction (EAQEC), showing how it is possible to send one logical qubit using three physical qubits and correct an arbitrary error to one qubit, assuming that the sender and receiver had pre-shared two ebits~\cite{Bow2002}.

Brun et al.~\cite{BDH2006,BDH2014} fully integrated Bowen's code into a new EA stabilizer framework, which removes the restriction that the classical codes used be self-orthogonal.
This greatly increases the types of classical codes that can be used to construct quantum codes, as long as the participants have some amount of pre-shared entanglement they are able to use as a resource.
Since then, a large body of work surrounding EAQEC has emerged, including EA quantum convolutional codes~\cite{WB2009,WB2010Apr,WB2010Jun}, EA classical codes~\cite{SKMB2010,PG2025}, bounds for EA codes~\cite{LA2018,Gra2021,LTY2024,DDSB2025}, as well as generalizations to the codeword-stabilized framework~\cite{SHB2011,AJH2016} and integration with subsystem and hybrid quantum-classical codes~\cite{HDB2007,KHB2008,SHB2013,NADKV2024}.
Two different error models are typically used: the noiseless ebit error model~\cite{BDH2006}, in which the receiver's share of the bipartite state does not experience noise, and the noisy ebit error model~\cite{LB2012}, in which the receiver's share potentially experiences noise.
See~\cite{BH2013} for an introduction to EAQEC.

The construction of EA codes from existing quantum codes has been studied extensively~\cite{LB2012,GHMR2019,GHMR2021,GHW2022,UM2022}; however, these results are primarily restricted to nondegenerate stabilizer codes and are typically done by a puncturing-like process on subsets of size less than $d$.
In this paper, we will show that any quantum code can easily be turned into an EA code by investigating those subsets that are correctable under the erasure channel.
In particular, we show that the Structure Theorem for replacer codes given recently by Chitambar et al.~\cite{CHKNN2025} naturally identifies both the shared bipartite state for the sender and receiver, as well as the sender's encoding unitary, allowing us to construct the first EA codes outside of the stabilizer and codeword-stabilized frameworks.
We also show that a degenerate correctable set may lead to an EA code with a compressed share for the receiver, at the cost of leaving the receiver's qubits unprotected against errors in the noisy ebit error model.

This paper is organized as follows. In Section 2, we briefly review the basics of both quantum codes for the erasure channel and EAQEC necessary for the rest of the paper. In Section 3, we show how the Structure Theorem of~\cite{CHKNN2025} leads to constructions of EA quantum codes, and we give several new examples of EA codes in the permutation-invariant and XP-stabilizer frameworks. In Section 4, we show how degeneracy in the original code may lead to a smaller entangled share for the receiver, at the cost of potentially losing error-correction protection on the receiver's qubits in the noisy ebit error model. We finish in Section 5 with some concluding remarks and potential future research directions.

\section{Background}

Below we review results from the field of quantum error correction that are necessary for this paper.
Throughout this paper we restrict ourselves to qubit codes.

\subsection{Quantum Error Correction}

A quantum error-correcting code $\mathcal{C}$ is a $K$-dimensional subspace of an $n$-qubit Hilbert space $\mathcal{H}=(\mathbb{C}^2)^{\otimes n}$.
We will write $\{\ket{\widetilde{i}}\}_{i=0}^{K-1}$ as an orthonormal basis of the code $\mathcal{C}$.
We say that a quantum code corrects an error $E\in\mathcal{L\!\left(H\right)}$, where $\mathcal{L\!\left(H\right)}$ is the set of linear operators on $\mathcal{H}$, if there exists a recovery operator $\mathcal{R}$ such that 
\begin{equation}
    \mathcal{R}\!\left(E\ket{\widetilde{\psi}}\!\bra{\widetilde{\psi}}E^\dagger\right)\propto \ket{\widetilde{\psi}}\!\bra{\widetilde{\psi}}
\end{equation}
for every $\ket{\widetilde{\psi}}\in\mathcal{C}$.
An equivalent condition in terms of the orthogonal projector $P$ onto the coding subspace was given by Knill and Laflamme in~\cite{KL1997}:
\begin{theorem}[Error Correction Condition for Quantum Subspace Codes]\label{thm:klcond}
    Let $\mathcal{C}\subseteq\mathcal{H}$ be a quantum code with projector $P$. Then $\mathcal{C}$ corrects the set of errors $\mathcal{E}\subseteq\mathcal{L}\!\left(\mathcal{H}\right)$ if and only if 
    \begin{equation}
        P\left(E_i^\dagger E_j\right)P=\lambda_{ij}P,
    \end{equation}
    for all $E_i,E_j\in\mathcal{E}$, where $\lambda_{ij}\in\mathbb{C}$.
\end{theorem}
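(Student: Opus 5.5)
We prove both directions of Theorem~\ref{thm:klcond}. We model the noise as a channel $\mathcal{N}(\rho)=\sum_k E_k\rho E_k^\dagger$ with $E_k\in\mathcal{E}$. Correction of $\mathcal{E}$ means that a single trace-preserving recovery $\mathcal{R}$, with Kraus operators $\{R_l\}$, undoes every error (and their linear combinations) on code states. We may rescale the errors so that $\sum_k E_k^\dagger E_k\le I$ without changing either condition.

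\emph{Necessity.} If $\mathcal{R}$ recovers $\mathcal{N}$ on $\mathcal{C}$, then $\mathcal{R}\circ\mathcal{N}$ has Kraus operators $R_lE_k$ and acts as a multiple of the identity on states supported in $\mathcal{C}$. A channel whose restriction to $\mathcal{C}$ is proportional to the identity must have every Kraus operator restricted to $\mathcal{C}$ proportional to the identity. This follows from the unitary freedom of Kraus representations, or more directly: apply the channel to a pure code state, observe that the output is pure, so each $R_lE_kP\ket{\psi}$ is parallel to $\ket{\psi}$; since this holds for all $\ket{\psi}$, linearity forces a scalar independent of $\psi$. Hence $R_lE_kP=c_{lk}P$. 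Then
\[
PE_i^\dagger E_jP=\sum_l PE_i^\dagger R_l^\dagger R_lE_jP=\sum_l \overline{c_{li}}c_{lj}P=\lambda_{ij}P .
\]

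\emph{Sufficiency.} Assume $PE_i^\dagger E_jP=\lambda_{ij}P$. The matrix $\Lambda=(\lambda_{ij})$ is Hermitian and positive semidefinite, since it is a Gram matrix. Diagonalize it, $\Lambda=U D U^\dagger$, and set $F_k=\sum_i u_{ik}E_i$. Then $PF_k^\dagger F_lP=d_k\delta_{kl}P$, and the $F_k$ span the same error space as the $E_i$. For each $k$ with $d_k>0$, use the polar decomposition $F_kP=\sqrt{d_k}\,U_kP$ with $U_k$ a partial isometry. The subspaces $\mathcal{C}_k=U_k\mathcal{C}$ are mutually orthogonal, because $PU_k^\dagger U_lP=0$ for $k\ne l$. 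Define projectors $P_k=U_kPU_k^\dagger$ and recovery Kraus operators $R_k=U_k^\dagger P_k$, completed by the projector onto the orthogonal complement of $\bigoplus_k\mathcal{C}_k$ so that the map is trace preserving. Then $R_kF_lP=\delta_{kl}\sqrt{d_k}P$, so $\mathcal{R}\circ\mathcal{N}$ acts as a multiple of the identity on code states. Errors with $d_k=0$ annihilate the code, since $F_kP=0$, and so require no correction.

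\emph{Main obstacle.} The delicate step is the necessity direction's claim that each Kraus operator of a channel acting as the identity on $\mathcal{C}$ is proportional to $P$ there. We handle this by the purity argument: every $R_lE_kP\ket{\psi}$ is parallel to $\ket{\psi}$ for all $\ket{\psi}\in\mathcal{C}$. Applying this to $\ket{\psi}$, $\ket{\phi}$ and $\ket{\psi}+\ket{\phi}$ forces a common proportionality constant.
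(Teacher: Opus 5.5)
Your proof is correct. The paper does not prove this theorem; it quotes it as background from Knill and Laflamme~\cite{KL1997}. Your argument is the standard one: a Kraus-operator purity argument for necessity, and for sufficiency, diagonalizing $\lambda$, taking polar decompositions, and measuring onto the mutually orthogonal subspaces $U_k\mathcal{C}$. So the paper offers no alternative route to compare with.

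Three minor points are worth tightening.

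\emph{Finiteness of $\mathcal{E}$.} Forming $\mathcal{N}$ and rescaling so that $\sum_k E_k^\dagger E_k\le I$ presupposes that $\mathcal{E}$ is finite. The paper later applies the theorem to sets such as all operators on $B$. Both the condition $PE_i^\dagger E_jP=\lambda_{ij}P$ and the property ``$R_lEP\propto P$ for every $l$'' are preserved under linear combinations. You can therefore replace $\mathcal{E}$ by a basis of its finite-dimensional span. For necessity you do not need $\mathcal{N}$ at all, since your purity argument works one error at a time.

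\emph{Necessity.} The paper's definition is per error, and its proportionality constant is a priori state dependent. The appeal to unitary freedom of Kraus representations therefore does not apply directly. Your direct argument does the real work: each $R_lE_k\ket{\psi}$ is parallel to $\ket{\psi}$, and linearity then forces a common scalar.

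\emph{Sufficiency.} State the conclusion for individual errors, not just for $\mathcal{R}\circ\mathcal{N}$. From $R_kF_lP=\delta_{kl}\sqrt{d_k}P$ and $E_i=\sum_k\overline{u_{ik}}F_k$ you get $R_kE_iP=\overline{u_{ik}}\sqrt{d_k}P$. Together with $P_\perp E_iP=0$, this gives $\mathcal{R}(E_i\rho E_i^\dagger)=\lambda_{ii}\rho$ for each $E_i$ separately, which is exactly what the definition requires.
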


The Hamming weight $\wt\!\left(\cdot\right)$ of an operator is the number of physical qubits on which it acts nontrivially.
The most common measure of error-correction ability is given by the minimum distance $d$ of the code\footnote{The minimum distance is typically easier to define using the Knill-Laflamme error detection condition: $d=\min\!\left\{\wt\!\left(E\right)\mid PEP\neq\lambda_E P,\;E\in\mathcal{L}\!\left(\mathcal{H}\right)\right\}$.}: a code with minimum distance $d$ can correct any error $E$ that acts nontrivially on the code with $\wt\!\left(E\right)\leq \left\lfloor\left(d-1\right)/2\right\rfloor$, can detect any error $E$ with $\wt\!\left(E\right)\leq d-1$, and can correct any erasure of $d-1$ or fewer physical qubits.
We say a quantum code $\mathcal{C}$ has parameters $\left(\!\left(n,K,d\right)\!\right)_2$ if it is a $K$-dimensional subspace of a Hilbert space on $n$ physical qubits with minimum distance $d$.
We will omit the subscript denoting the local dimension throughout the remainder of this paper, as we have restricted ourselves to qubit codes.

An important class of quantum codes are the stabilizer codes~\cite{Got1997,CRSS1998}, defined in terms of the $n$-qubit Pauli group $\mathcal{P}_n$.
Let $\mathcal{S}$ be an abelian subgroup of $\mathcal{P}_n$ not containing $-I$.
The stabilizer code $\mathcal{C}$ is the subspace stabilized by $\mathcal{S}$ (which we refer to as the stabilizer group of the code):
\begin{equation}
    \mathcal{C}=\Span\!\left\{\ket{\psi}\mid S\ket{\psi}=\ket{\psi},\;\forall S\in\mathcal{S}\right\}.
\end{equation}
If $\mathcal{S}$ has $n-k$ independent generators, then $\mathcal{C}$ can encode $k$ logical qubits, that is $\dim\!\left(\mathcal{C}\right)=K=2^k$.
As such, we often write the parameters of a stabilizer code as $\left[\!\left[n,k,d\right]\!\right]$ to distinguish them from the parameters of nonadditive (i.e., non-stabilizer) quantum codes.

For stabilizer groups, the normalizer $N\!\left(\mathcal{S}\right)$ and centralizer $C\!\left(\mathcal{S}\right)$ in $\mathcal{P}_n$ are equivalent, containing all Paulis that commute with every element of $\mathcal{S}$.
This necessarily gives us $\mathcal{S}\leq N\!\left(\mathcal{S}\right)$ as $\mathcal{S}$ is abelian.
Elements of this group act as the logical Pauli operators on the codespace.
For codes in the stabilizer framework, we can extract the minimum distance of the code as
\begin{equation}
    d=\min\wt\!\left(N\!\left(\mathcal{S}\right)-\left\langle \mathcal{S},iI\right\rangle\right),
\end{equation}
that is, $d$ is the weight of the smallest nontrivial logical operator.

\subsection{Erasure and Replacer Channels}

A quantum channel is a completely positive trace-preserving map $\mathcal{E}:\mathcal{L\!\left(H\right)} \rightarrow\mathcal{L\!\left(K\right)}$ between the operators of two Hilbert spaces $\mathcal{H}$ and $\mathcal{K}$.
For every subset of qubits $M$ of $\mathcal{H}=\overline{B}\otimes B$, there exists a family of quantum replacer channels on $\mathcal{H}$ defined by
\begin{equation}
    \mathcal{E}_B:\mathcal{L\!\left(H\right)}\rightarrow\mathcal{L\!\left(H\right)},\;\rho\mapsto \rho_{\overline{B}}\otimes\sigma_B,
\end{equation}
where $\rho_{\overline{B}}=\Tr_B\!\left(\rho\right)$ is the reduced state on $\overline{B}$ and $\sigma_B$ is a fixed ``replacer state'' that is the same for all inputs.
The replacer channel may be viewed as an erasure channel on subset $B$, followed by a reset of the lost qubits to a known state $\sigma_B$.

Recently Chitambar et al. \cite{CHKNN2025} introduced a Structure Theorem for quantum replacer codes, synthesizing past results on unitarily recoverable codes~\cite{NS2006,KS2006}.
These codes are able to recover from a quantum replacer channel $\mathcal{E}_B$ on an $b$-qubit subset $B$, that maps all states on $B$ to some fixed state $\sigma_B$.
In what follows, we restrict ourselves to error channels that are completely depolarizing on $B$, which is equivalent to a replacer channel with $\sigma_B=I_B/2^b$, the maximally-mixed state on $B$.
It is well known that the completely depolarizing channel on a set of qubits can be modeled by a quantum channel whose Kraus operators are the Pauli operators on the space; furthermore, the same result holds as long as the Kraus operators form an orthonormal basis for the operators~\cite[Lemma 4.1]{JW2022}:
\begin{equation}
    \frac{1}{\dim\!\left(\mathcal{H}\right)^2}\sum_{E\in\mathcal{E}}E\rho E^\dagger=\frac{I_{\mathcal{H}}}{\dim\!\left(\mathcal{H}\right)},
\end{equation}
for any density matrix $\rho\in\mathcal{L\!\left(H\right)}$, where $\mathcal{E}$ forms an orthonormal basis of $\mathcal{L\!\left(H\right)}$.
We say that a known subset of qubits $B$ is correctable if we can recover from a completely depolarizing channel on $B$, or equivalently, the set of correctable errors $\mathcal{E}$ of the code contains an orthonormal basis of $\mathcal{L}\!\left(B\right)$.

We give only those parts of the Structure Theorem that are relevant to this paper:
\begin{theorem}[{\cite[Structure Theorem]{CHKNN2025}}]\label{thm:structure}
    Let $\mathcal{C}$ be a quantum code with orthonormal basis $\left\{\ket{\tilde{i}}\right\}$, and let $B$ be a known subset of $b$ physical qubits. Then the following are equivalent:
    \begin{enumerate}
        \item The set of qubits $B$ is correctable for the code $\mathcal{C}$.
        \item There is a reference system $R=\Span\!\left\{\ket{i}\right\}\cong \mathcal{C}$, an ancilla system $A$ with $\left(\dim A\right)\left(\dim \mathcal{C}\right)\leq \dim\overline{B}$, an isometry $U_{\overline{B}}:RA\rightarrow \overline{B}$, and a density operator $\sigma_{AB}$, such that for all $\tilde{\rho}=\sum_{ij}\alpha_{ij}\ket{\tilde{i}}\!\bra{\tilde{j}}$ supported on $\mathcal{C}$, we have
        \begin{equation*}
            \mathcal{E}_B\!\left(\tilde{\rho}\right)= \left(\mathcal{U}_{\overline{B}}\otimes I_B\right) \left(\rho_R\otimes \sigma_{AB}\right),
        \end{equation*}
        where $\rho_R=\sum_{ij}\alpha_{ij}\ket{i}\!\bra{j}_R$ and $\mathcal{U}_{\overline{B}}\!\left(\cdot\right)=U_{\overline{B}}\!\left(\cdot\right)\! U_{\overline{B}}^\dagger$. Further, we have
        \begin{equation*}
            \sigma_{AB}=\Gamma_A\otimes I_B/2^b,
        \end{equation*}
        for some density operator $\Gamma_A$.
        \item There is a reference system $R=\Span\!\left\{\ket{i}\right\}\cong \mathcal{C}$, an ancilla system $A$ with $\left(\dim A\right)\left(\dim \mathcal{C}\right)\leq \dim\overline{B}$, an isometry $U_{\overline{B}}:RA\rightarrow \overline{B}$, and a state $\ket{\psi}_{AB}$, all determined by $B$ and $\mathcal{C}$, such that for all $\ket{\tilde{i}}\in \mathcal{C}$, we have
        \begin{equation*}
            \ket{\tilde{i}}=\left(U_{\overline{B}}\otimes I_B\right)\left(\ket{i}_R\otimes \ket{\psi}_{AB}\right).
        \end{equation*}
    \end{enumerate}

\end{theorem}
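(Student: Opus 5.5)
We prove the cycle (1)$\Rightarrow$(3)$\Rightarrow$(2)$\Rightarrow$(1). The central tool is the Knill--Laflamme condition (Theorem~\ref{thm:klcond}) applied to an orthonormal basis of $\mathcal{L}(B)$, together with the Schmidt decomposition across the cut $\overline{B}|B$.

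\textbf{(1)$\Rightarrow$(3).} If $B$ is correctable, then for an orthonormal operator basis $\{E_a\}$ of $\mathcal{L}(B)$, Theorem~\ref{thm:klcond} gives $P(E_a^\dagger E_b\otimes I_{\overline{B}})P=\lambda_{ab}P$. Since the products $E_a^\dagger E_b$ span $\mathcal{L}(B)$, this means $\bra{\tilde i}(X_B\otimes I)\ket{\tilde j}=\delta_{ij}\,c(X)$ for every $X_B$, with $c$ independent of $i,j$. Equivalently, the reduced operators satisfy $\Tr_{\overline{B}}\ket{\tilde i}\!\bra{\tilde j}=\delta_{ij}\tau_B$ for a single state $\tau_B$.

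Write the spectral decomposition $\tau_B=\sum_k p_k\ket{k}\!\bra{k}_B$ with $p_k>0$. Each codeword then has a Schmidt decomposition $\ket{\tilde i}=\sum_k\sqrt{p_k}\,\ket{v_{ik}}_{\overline{B}}\ket{k}_B$, using the same $B$-basis for every $i$. Computing the off-diagonal reductions $\Tr_{\overline{B}}\ket{\tilde i}\!\bra{\tilde j}=\sum_{k,l}\sqrt{p_kp_l}\,\langle v_{jl}|v_{ik}\rangle\ket{k}\!\bra{l}$, the condition above forces $\langle v_{jl}|v_{ik}\rangle=\delta_{ij}\delta_{kl}$.

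Let $A=\Span\{\ket{k}_A\}$ be a copy of the support of $\tau_B$, and define $U_{\overline{B}}\ket{i}_R\ket{k}_A=\ket{v_{ik}}$. This map is an isometry because the family $\{\ket{v_{ik}}\}$ is orthonormal, and injectivity gives $(\dim A)(\dim\mathcal C)\le\dim\overline{B}$. Setting $\ket{\psi}_{AB}=\sum_k\sqrt{p_k}\ket{k}_A\ket{k}_B$ yields exactly the factorization in (3).

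\textbf{(3)$\Rightarrow$(2).} By linearity, $\tilde\rho=(\mathcal U_{\overline{B}}\otimes I)(\rho_R\otimes\ket\psi\!\bra\psi_{AB})$. Applying $\mathcal E_B$ replaces the $B$ marginal by $I_B/2^b$. The isometry acts only on $\overline{B}$, so it commutes with the partial trace over $B$. The result is $(\mathcal U_{\overline{B}}\otimes I)(\rho_R\otimes\Gamma_A\otimes I_B/2^b)$ with $\Gamma_A=\Tr_B\ket\psi\!\bra\psi$, which is (2).

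\textbf{(2)$\Rightarrow$(1).} Apply $U_{\overline{B}}^\dagger$ (completed to a unitary or followed by a partial isometry channel) and then trace out $A$ and $B$. This recovery map returns $\rho_R$, which is re-encoded into $\mathcal C$ by the fixed map $\ket{i}\mapsto\ket{\tilde i}$. The composite is a recovery for the depolarizing channel on $B$, whose Pauli Kraus operators form an orthonormal basis of $\mathcal{L}(B)$ by~\cite[Lemma 4.1]{JW2022}. Hence $B$ is correctable.

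The main obstacle is the (1)$\Rightarrow$(3) step: one must check that the Knill--Laflamme condition gives a reduced state $\tau_B$ that is \emph{common to all codewords} and has vanishing cross terms, since this is exactly what allows a single Schmidt basis on $B$ and a single ancilla state $\ket\psi_{AB}$. The orthonormality computation is where this is verified.
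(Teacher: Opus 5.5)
Your proof is correct and complete. The paper gives no proof of this statement. It quotes it from \cite{CHKNN2025}, which it describes as synthesizing results on unitarily recoverable codes \cite{NS2006,KS2006}: a channel that is invertible on a code acts there as an isometry applied to the logical state tensored with a fixed ancilla state, which is essentially item (2). Your route bypasses that machinery. Because the correctable errors span all of $\mathcal{L}(B)$, Theorem~\ref{thm:klcond} collapses to $\Tr_{\overline{B}}\ket{\tilde i}\!\bra{\tilde j}=\delta_{ij}\tau_B$. A single eigenbasis of $\tau_B$ then produces $U_{\overline{B}}$ and $\ket{\psi}_{AB}$ at once, and (2) follows by a partial-trace computation. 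Using one fixed eigenbasis is fine even when $\tau_B$ has repeated eigenvalues: set $\ket{v_{ik}}=p_k^{-1/2}(I\otimes\bra{k})\ket{\tilde i}$, note that components along $\ker\tau_B$ vanish since $\bra{m}\tau_B\ket{m}=0$, and your orthonormality computation does the rest.

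Your approach has three advantages. It is elementary and constructive. It is exactly the recipe the paper carries out by hand in Examples~\ref{ex:eapi4}, \ref{ex:eaxp7} and~\ref{ex:eapi7}. It also makes $\dim A=\rank\tau_B$ explicit, where $\tau_B$ is the common $B$-marginal of all codewords, i.e.\ the operator $\varrho_B$ in the proof of Theorem~\ref{thm:degenentanglement}. Hence the compressed receiver share in Theorem~\ref{thm:eacondensed} is just the range of $\tau_B$, of dimension $C=\rank\varrho_B$, the quantity that proof relates to degeneracy.

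The unitary-recoverability route buys generality instead. Form (2) holds for every correctable channel, not only erasures of a subset, and is the natural starting point for the subsystem and OAQEC extensions mentioned in the conclusion.

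One small addition: the ``Further'' clause of (2) might be read as saying that \emph{every} admissible $\sigma_{AB}$ has product form. In that case, conjugate $\Tr_B(\tilde\rho)\otimes I_B/2^b=(\mathcal{U}_{\overline{B}}\otimes I_B)(\rho_R\otimes\sigma_{AB})$ by $U_{\overline{B}}^\dagger$ and trace out $R$ to obtain $\sigma_{AB}=\Gamma_A\otimes I_B/2^b$ directly.
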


A useful characterization of correctable sets for stabilizer codes is given by the Cleaning Lemma of Bravyi and Terhal~\cite{BT2009}, which states that for a set of qubits $B$ of a stabilizer code with stabilizer group $\mathcal{S}$, then either
\begin{enumerate}
    \item there exists a non-trivial logical operator $L\in N\!\left(\mathcal{S}\right)-\left\langle\mathcal{S},iI\right\rangle$ such that $\supp\!\left(L\right)\subseteq B$, or
    \item for any logical operator $L\in N\!\left(\mathcal{S}\right)$, there exists a stabilizer $S\in\mathcal{S}$ such that $\supp\!\left(LS\right)\subseteq\overline{B}$.
\end{enumerate}
The second case, referred to as ``cleaning-off'' subsystem $B$ is an equivalent condition for $B$ being correctable.\footnote{This appears to have been a folklore result, with the first proof in the literature we were able to find having been given relatively recently by Ashikhmin in~\cite[Theorem 1]{Ash2020}. However, we note that the result follows directly from an early result of Rains on the unitary enumerators of quantum codes (see the discussion following~\cite[Theorem 10]{Rai1998}).}
As a corollary of Theorem~\ref{thm:structure}, we recover a generalization of Case (2) of the Cleaning Lemma, albeit in one direction:

\begin{corollary}\label{thm:genclean}
    Suppose $\mathcal{C}$ is a quantum code with projector $P$. If $B$ is a correctable subset of $\mathcal{C}$, then there are no non-trivial logical operators $L$ on the codespace such that $\supp\!\left(L\right)\subseteq B$.
    Moreover, any logical unitary $V$ acting on the reference system $R$ has an equivalent physical unitary operator $\widetilde{V}$ acting on $\mathcal{C}$ such that $\supp\!\left(\widetilde{V}\right)\subseteq \overline{B}$.
\end{corollary}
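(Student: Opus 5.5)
The plan is to read both claims off the factorization in Case~(3) of Theorem~\ref{thm:structure}. Since $B$ is correctable, there are a reference system $R\cong\mathcal{C}$, an ancilla $A$, an isometry $U_{\overline{B}}:RA\rightarrow\overline{B}$, and a state $\ket{\psi}_{AB}$ with $\ket{\tilde{i}}=\left(U_{\overline{B}}\otimes I_B\right)\left(\ket{i}_R\otimes\ket{\psi}_{AB}\right)$ for every $i$. Writing $W=U_{\overline{B}}\otimes I_B$, which is an isometry $RAB\rightarrow\overline{B}B$, the code projector becomes
\begin{equation*}
P=W\left(I_R\otimes\ket{\psi}\!\bra{\psi}_{AB}\right)W^\dagger .
\end{equation*}
Every physical operator is then analysed by conjugating it back through $W$.

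\textbf{First claim.} Let $L=I_{\overline{B}}\otimes L_B$ be any operator with $\supp(L)\subseteq B$. Because $L$ acts trivially on $\overline{B}$ and $W$ acts trivially on $B$, we have $LW=W\left(I_{RA}\otimes L_B\right)$. Hence
\begin{equation*}
\bra{\tilde{i}}L\ket{\tilde{j}}=\bra{i}\!\bra{\psi}\left(I_{RA}\otimes L_B\right)\ket{j}\!\ket{\psi}=\delta_{ij}\bra{\psi}\left(I_A\otimes L_B\right)\ket{\psi}.
\end{equation*}
Therefore $PLP=\lambda_L P$ with $\lambda_L=\bra{\psi}(I_A\otimes L_B)\ket{\psi}$, so $L$ acts trivially on the codespace. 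In particular no non-trivial logical operator can be supported on $B$. Equivalently, this is the Knill--Laflamme condition for all errors on $B$, which is consistent with Theorem~\ref{thm:klcond}.

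\textbf{Second claim.} Given a unitary $V$ on $R$, I need a unitary on $\overline{B}$ that implements $V$ on the code. Since $U_{\overline{B}}$ is only an isometry, I first extend $U_{\overline{B}}(V\otimes I_A)U_{\overline{B}}^\dagger$, which is a partial isometry on $\overline{B}$, to a full unitary. Concretely, define
\begin{equation*}
\widetilde{V}_{\overline{B}}=U_{\overline{B}}(V\otimes I_A)U_{\overline{B}}^\dagger+\left(I_{\overline{B}}-U_{\overline{B}}U_{\overline{B}}^\dagger\right).
\end{equation*}
This is unitary because the first term maps the range of $U_{\overline{B}}$ unitarily onto itself, and the second term is the identity on the orthogonal complement of that range. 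Now set $\widetilde{V}=\widetilde{V}_{\overline{B}}\otimes I_B$, so $\supp(\widetilde{V})\subseteq\overline{B}$. Then
\begin{equation*}
\widetilde{V}\ket{\tilde{i}}=\left(U_{\overline{B}}(V\otimes I_A)\otimes I_B\right)\ket{i}\ket{\psi}=W\left(V\ket{i}\otimes\ket{\psi}\right)=\sum_j V_{ji}\ket{\tilde{j}}.
\end{equation*}
So $\widetilde{V}$ preserves $\mathcal{C}$ and acts on it exactly as $V$ does on $R$.

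\textbf{Main obstacle.} The only delicate point is the unitary extension, because $U_{\overline{B}}$ is not necessarily surjective; the complement projector term handles this. A second, minor point is interpreting ``non-trivial logical operator'' as an operator that does not act as a scalar on $\mathcal{C}$. With that reading, the first claim is exactly the computation above.
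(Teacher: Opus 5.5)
Your proof is correct, and for the second claim it uses the paper's own construction. The paper sets $\widetilde{V}=U_{\overline{B}}(V\otimes I_A)U_{\overline{B}}^\dagger$ directly from Theorem~\ref{thm:structure}. Your added term $I_{\overline{B}}-U_{\overline{B}}U_{\overline{B}}^\dagger$ repairs a point the paper glosses over. The Structure Theorem only guarantees $(\dim A)(\dim\mathcal{C})\leq\dim\overline{B}$, and the inequality can be strict; in the $((4,2,2))$ permutation-invariant example, $U_{\overline{B}}$ maps a $4$-dimensional space into an $8$-dimensional one. In that case the paper's operator is only a partial isometry on $\overline{B}$, not a unitary. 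Your completion fixes this without changing the action on $\mathcal{C}$, because $(I_{\overline{B}}-U_{\overline{B}}U_{\overline{B}}^\dagger)U_{\overline{B}}=0$.

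For the first claim, the paper simply cites Corollary 4.8 of Chitambar et al. You instead derive $PLP=\lambda_L P$ for every $L$ supported on $B$ directly from the factorization in Case (3). That is the Knill--Laflamme detection condition on $B$, and it immediately rules out non-trivial logical operators there. Your route keeps the argument self-contained within the paper's Theorem~\ref{thm:structure} rather than relying on an external result, at no extra cost.
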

\begin{proof}
    The first statement follows from \cite[Corollary 4.8]{CHKNN2025}.
    The second statement can be obtained from Theorem~\ref{thm:structure} by setting $\widetilde{V}=\mathcal{U}_{\overline{B}}\!\left(V_R\otimes I_A\right)$.
\end{proof}

\subsection{Entanglement-Assisted Quantum Error Correction}

Quantum codes are designed with only the assumption that the sender and receiver share a quantum channel between them.
However, if we allow for the scenario where there is some amount of shared entanglement between the two parties, we get the more general class of entanglement-assisted (EA) quantum codes, first introduced by Bowen~\cite{Bow2002} and integrated into the stabilizer framework by Brun et al.~\cite{BDH2006,BDH2014}.
Here we review the construction of EA stabilizer codes.

As in the stabilizer case, we take a Pauli subgroup $\mathcal{S}\leq\mathcal{P}_n$ not containing $-I$ to be our stabilizer group, but here we allow $\mathcal{S}$ to be nonabelian.
We can always find a set of generators $\mathcal{S}=\langle\widetilde{Z}_1,\dots,\widetilde{Z}_{c+s},\widetilde{X}_1,\dots,\widetilde{X}_c\rangle$ that satisfy the standard Pauli commutation relations:
\begin{align}
    \left[\widetilde{Z}_i,\widetilde{Z}_j\right]=0 \quad & \quad \forall\;i,j, \\
    \left[\widetilde{X}_i,\widetilde{X}_j\right]=0 \quad & \quad \forall\;i,j, \\
    \left[\widetilde{X}_i,\widetilde{Z}_j\right]=0 \quad & \quad \forall\;i\neq j, \\
    \left\{\widetilde{X}_i,\widetilde{Z}_i\right\}=0 \quad & \quad \forall\;i,
\end{align}
where $\left[M,N\right]$ is the commutator and $\left\{M,N\right\}$ is the anticommutator of two matrices $M$ and $N$.
The isotropic subgroup is given by $\mathcal{S}_I=\langle\widetilde{Z}_{c+1},\dots,\widetilde{Z}_{c+s}\rangle$, and the symplectic (or entanglement) subgroup is given by $\mathcal{S}_S=\langle\widetilde{Z}_1,\dots,\widetilde{Z}_{c},\widetilde{X}_1,\dots,\widetilde{X}_c\rangle$.
The values of $s$ and $c$ give the number of ancilla qubits and the number of ebits required by the code.
We write the parameters of an EA stabilizer code $\left[\!\left[n,k,d;c\right]\!\right]$.

As the stabilizer group $\mathcal{S}$ is nonabelian, it does not stabilize a subspace.
However, there exists a unitary mapping $U$ such that $U\widetilde{Z}_iU^\dagger=Z_i$ (that is, a Pauli-$Z$ operator acting only on the $i$-th qubit) and $U\widetilde{X}_iU^\dagger=X_i$, giving us $c$ anticommuting $X$ and $Z$ pairs, as well as $s$ additional unpaired $Z$ operators.
To get a set of commuting operators, we can extend the symplectic subgroup generators by appending an $Z$ (respectively $X$) operator to the end of each $Z_i$ (respectively $X_i$) in an anticommuting pair, giving us a new pair of commuting stabilizer generators.
Note that a new physical qubit must be added for each anticommuting pair.
We can then apply $U^\dagger$ to the original qubits, giving an EA code.
The encoded vectors in the resulting code will have the form
\begin{equation}
    \left(U_{\overline{B}}^\dagger\otimes I_B\right)\left(\ket{\psi}_R\ket{0}_{S}^{\otimes s}\ket{\Phi}_{AB}^{\otimes c}\right),
\end{equation}
where $U_{\overline{B}}^\dagger:RSA\rightarrow\overline{B}$ and $\ket{\Phi}_{AB}=\frac{1}{\sqrt{2}}\left(\ket{00}+\ket{11}\right)$ is the maximally entangled Bell state, which we will refer to as an ebit.
If the sender and receiver share $c$ ebits, then the sender can encode the state $\ket{\psi}$ into the code with the help of $s$ ancilla states and her share of the ebits.

\begin{example}\label{ex:stabea5}

    Here we show how the original $\left[\!\left[3,1,3;2\right]\!\right]$ EA code of Bowen~\cite{Bow2002} can be cast into the stabilizer framework.
    We start with a nonabelian stabilizer group $\mathcal{S}=\langle XZZ,ZYY,ZZX,YYZ\rangle$, which can be partitioned into two anticommuting pairs.
    From here, there exists a unitary $U$ that takes the stabilizer generators to single-qubit Paulis, which can then be extended.
    These two pairs of extended generators stabilize two ebits, which can be distributed between the sender and the receiver.
    From there, the sender can apply the encoding unitary $U^\dagger$ to her message and halves of the ebits, to encode into the codespace:

    \begin{align*}
        \begin{array}{ccccc}
            \widetilde{X}_1 & = & X & Z & Z \\
            \widetilde{Z}_1 & = & Z & Y & Y \\
            \widetilde{X}_2 & = & Z & Z & X \\
            \widetilde{Z}_2 & = & Y & Y & Z \\
        \end{array}
        \overset{U}{\longmapsto} & 
        \begin{array}{ccc}
            X & I & I \\
            Z & I & I \\
            I & X & I \\
            I & Z & I \\
        \end{array} \\
        \overset{extend}{\longmapsto} & 
        \begin{array}{ccc|cc}
            X & I & I & X & I \\
            Z & I & I & Z & I \\
            I & X & I & I & X \\
            I & Z & I & I & Z \\
        \end{array} \\
        \overset{U^\dagger\otimes I_4}{\longmapsto} & 
        \begin{array}{ccc|cc}
            X & Z & Z & X & I \\
            Z & Y & Y & Z & I \\
            Z & Z & X & I & X \\
            Y & Y & Z & I & Z \\
        \end{array}
    \end{align*}

    These extended stabilizers stabilize a codespace that is identical to that of the perfect $\left[\!\left[5,1,3\right]\!\right]$ stabilizer code~\cite{LMPZ1996,BDSW1996}, but spread across the sender and receiver.
\end{example}

If we assume that the pre-shared ebits are noiseless, then any noise in the system must occur on those qubits transferred over the quantum channel between the sender to the receiver. Similar to the non-EA case, we can calculate the minimum distance of an EA stabilizer code from the stabilizer group:
\begin{equation}
    d=\min\wt\!\left(N\!\left(\mathcal{S}\right)-\left\langle \mathcal{S}_I,iI\right\rangle\right),
\end{equation}
that is, $d$ is the weight of the smallest nontrivial logical operator on sender's qubits.

Of primary interest to this paper is the question of how to construct EA codes from preexisting quantum codes.
While there have been multiple results along these lines (c.f.~\cite{LB2012,GHMR2019,GHMR2021,GHW2022,UM2022}), these results are primarily restricted to nondegenerate stabilizer codes and are done by puncturing process on subsets of size less than $d$.
The most complete is the following result from Grassl et al.~\cite{GHW2022}:
\begin{theorem}[{\cite[Theorem 10]{GHW2022}}]
    Let $\mathcal{C}$ be a pure $\left[\!\left[n,k,d\right]\!\right]$ quantum code. Then an entanglement-assisted code with parameters $\left[\!\left[n-c,k,d;c\right]\!\right]$ exists for all $c<d$.
\end{theorem}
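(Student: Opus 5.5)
The plan is to realize the EA code by choosing a set $B$ of $c$ physical qubits of $\mathcal{C}$ to serve as the receiver's share. Then I apply part (3) of Theorem~\ref{thm:structure} to split each codeword into a sender part on $\overline{B}$ and a fixed bipartite resource on $AB$.

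First I fix any subset $B$ of $c<d$ qubits. Since $\mathcal{C}$ has minimum distance $d$, it corrects the erasure of any $d-1$ or fewer qubits, so $B$ is correctable. Part (3) of Theorem~\ref{thm:structure} then supplies an isometry $U_{\overline{B}}:RA\to\overline{B}$ and a state $\ket{\psi}_{AB}$ with $\ket{\tilde i}=(U_{\overline{B}}\otimes I_B)(\ket{i}_R\ket{\psi}_{AB})$. Part (2) gives $\sigma_{AB}=\Gamma_A\otimes I_B/2^c$, so $\Tr_A\ket{\psi}\!\bra{\psi}=I_B/2^c$. Hence $\ket{\psi}_{AB}$ is locally unitarily equivalent to $c$ ebits $\ket{\Phi}^{\otimes c}$, with $A$ decomposing as $c$ ebit halves times a pure ancilla. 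Absorbing that local unitary into $U_{\overline{B}}$, the sender holds $n-c$ qubits and encodes $k$ logical qubits by applying $U_{\overline{B}}$ to her message, ancillas and ebit halves. For stabilizer codes this matches the usual EA form: the extended stabilizer of $\mathcal{C}$, restricted to $\overline{B}$, becomes the nonabelian group $\mathcal{S}$ with $c$ symplectic pairs.

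Next I bound the distance in the noiseless ebit model. Errors act only on $\overline{B}$. Purity means every nonidentity operator of weight less than $d$ satisfies $PEP=\Tr(E)P/2^n=0$. So any error $E_{\overline{B}}\otimes I_B$ of weight at most $d-1$ on the sender's qubits is detected by $\mathcal{C}$, hence by the EA code, giving distance at least $d$. For the stabilizer parameter statement, I translate this into the claim that $N(\mathcal{S})-\langle\mathcal{S}_I,iI\rangle$ contains no operator on $\overline{B}$ of weight below $d$. Each such operator extends by some Pauli on $B$ to an element of the normalizer of the original stabilizer that is not in the original stabilizer group. That element is a nonidentity operator of weight less than $d+c$, but only its $\overline{B}$ part matters, and detectability of $E\otimes I_B$ already gives the bound.

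The main obstacle is pinning down that the EA distance is not merely at least $d$ but that the parameters are exactly $[\![n-c,k,d;c]\!]$. This requires the resource to consist of exactly $c$ ebits and no fewer. Purity is what ensures the reduced state on $B$ is maximally mixed, rather than only having support on a smaller subspace as in the degenerate case discussed later. If the distance came out strictly larger, the statement only needs existence, so "distance at least $d$" suffices.
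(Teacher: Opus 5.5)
Your overall route is the one the paper has in mind: a set $B$ of $c<d$ qubits is correctable, Theorem~\ref{thm:structure}(3) splits each codeword, and the resource is identified with $c$ ebits (Theorem~\ref{thm:eauncondensed} plus Case (1) of Theorem~\ref{thm:degenentanglement}). However, the step that identifies the resource is invalid. In part (2) of Theorem~\ref{thm:structure}, $\sigma_{AB}=\Gamma_A\otimes I_B/2^b$ is the state \emph{after} $\mathcal{E}_B$ has discarded $B$ and overwritten it with the fixed replacer state $I_B/2^b$. Its $B$-factor says nothing about the marginal $\Tr_A\ket{\psi}\!\bra{\psi}_{AB}$. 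Part (2) holds for every correctable set, so if your inference were valid, every correctable set would give a resource of Schmidt rank $2^b$. That contradicts Section~\ref{sec:reduceentang}: in Example~\ref{ex:eapi7}, $\lvert B\rvert=2<d=3$, yet the marginal on $B$ has rank $3$, because that code is degenerate rather than pure.

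What actually forces $c$ ebits is purity, which you only assert in your last paragraph without an argument. Every nonidentity Pauli $E$ supported on $B$ has weight at most $c<d$, so purity gives $PEP=0$ and hence $\bra{\tilde i}E\ket{\tilde i}=0$. Each codeword's marginal on $B$ therefore has vanishing nonidentity Pauli expectations and equals $I_B/2^c$. Because $U_{\overline{B}}$ is an isometry, $\Tr_{\overline{B}}\ket{\tilde i}\!\bra{\tilde i}=\Tr_A\ket{\psi}\!\bra{\psi}$. So $\ket{\psi}_{AB}$ has a flat Schmidt spectrum of rank $2^c$, and only then can you write it as $(W\otimes I_B)\ket{\Phi}^{\otimes c}$ for an isometry $W$ into $A$ absorbed into $U_{\overline{B}}$.

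Conversely, your distance bound does not need purity: distance $d$ of $\mathcal{C}$ already makes every $E_{\overline{B}}\otimes I_B$ of weight at most $d-1$ detectable. In the stabilizer translation, no extension on $B$ is needed either, since $E\otimes I_B$ itself lies in the normalizer of the original stabilizer. With purity moved from the distance argument to the identification of the resource, the proof goes through.
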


\begin{remark}\label{thm:remarkGHW}
    The construction given by Grassl et al. in~\cite[Theorem 10]{GHW2022} distinguishes itself from the other results on the subject in that it applies to a general pure quantum code, including those outside the stabilizer framework.
    However, this may have been overlooked due to their use of the stabilizer-style parameters for the resulting EA code, as well as the fact that they do not give any examples of nonadditive EA codes.

    Later in this work we remedy this by providing the first examples of EA codes outside of the stabilizer and codeword-stabilized frameworks in Examples~\ref{ex:eapi4}, \ref{ex:eaxp7}, and~\ref{ex:eapi7}.
    Moreover, we generalize their result to any quantum code in Theorems~\ref{thm:eauncondensed} and~\ref{thm:eacondensed}, relying on the more general notion of correctable subsets, which subsumes sets of size less than the minimum distance.
\end{remark}

\section{Entanglement-Assisted Subspace Codes}

When constructing EA codes, there are two error models possible: noiseless ebits~\cite{BDH2006} and noisy ebits~\cite{LB2012}.
In the noiseless ebit error model, we assume the ebits are not affected by any noise, so that any errors must occur on the qubits of $\overline{B}$ as they are transmitted across the quantum channel.
In the noisy ebit error model, we do not make this assumption, and allow for imperfections in the ebits prior to the start, as well as errors that may occur on $B$ while waiting for $\overline{B}$ to be transmitted.

As previous research on EA codes has primarily been restricted to the stabilizer framework, the EA error correction conditions were also formulated in the same language.
We start our investigation into nonadditive EA codes by giving Knill-Laflamme style error correction conditions.

\begin{theorem}[Error Correction Condition for Noiseless Ebits]
    Let $\mathcal{C}\subseteq\overline{B}\otimes B$ be a quantum code with projector $P$. Then $\mathcal{C}$ corrects the set of errors $\mathcal{E}\subseteq\mathcal{L}\!\left(\overline{B}\right)$ if and only if 
    \begin{equation}
        P\left(E_i^\dagger E_j\otimes I_B\right)P=\lambda_{ij}P,
    \end{equation}
    for all $E_i,E_j\in\mathcal{E}$, where $\lambda_{ij}\in\mathbb{C}$.
\end{theorem}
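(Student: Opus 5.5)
This is a direct reduction to Theorem~\ref{thm:klcond}. In the noiseless ebit model the code $\mathcal{C}$ is still a subspace of the full Hilbert space $\mathcal{H}=\overline{B}\otimes B$. The only change is that the error set is restricted to operators of the form $E\otimes I_B$ with $E\in\mathcal{E}\subseteq\mathcal{L}\!\left(\overline{B}\right)$, since errors act only on the transmitted qubits. The receiver holds both $\overline{B}$ (after transmission) and $B$, so any recovery map $\mathcal{R}$ on $\mathcal{L}\!\left(\mathcal{H}\right)$ is allowed. Correcting $\mathcal{E}$ in the EA sense is therefore, by definition, the same as $\mathcal{C}$ correcting the lifted error set $\mathcal{E}'=\left\{E\otimes I_B \mid E\in\mathcal{E}\right\}\subseteq\mathcal{L}\!\left(\mathcal{H}\right)$ in the ordinary sense.

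The steps are as follows. First, I will make this identification explicit: the recovery condition $\mathcal{R}\!\left((E\otimes I_B)\ket{\widetilde{\psi}}\!\bra{\widetilde{\psi}}(E\otimes I_B)^\dagger\right)\propto\ket{\widetilde{\psi}}\!\bra{\widetilde{\psi}}$ for all $\ket{\widetilde{\psi}}\in\mathcal{C}$ is exactly the statement that $\mathcal{C}$ corrects $\mathcal{E}'$. Second, I will apply Theorem~\ref{thm:klcond} to $\mathcal{E}'$. This says correction holds if and only if $P\left(E_i\otimes I_B\right)^\dagger\left(E_j\otimes I_B\right)P=\lambda_{ij}P$ for all $i,j$. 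Third, I will simplify using $\left(E_i\otimes I_B\right)^\dagger\left(E_j\otimes I_B\right)=E_i^\dagger E_j\otimes I_B$, which gives the stated condition.

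There is no real obstacle here. The only point needing care is the modeling step: the recovery may act jointly on $\overline{B}$ and the receiver's share $B$, and the ebits being noiseless means no error touches $B$. This is what justifies the lift to $\mathcal{E}'$ rather than some condition involving only the reduced state on $\overline{B}$.
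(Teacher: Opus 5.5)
Your proposal is correct and matches the paper, which only notes that this condition is a special case of Theorem~\ref{thm:klcond}. You make that explicit: lift the errors to $\{E\otimes I_B \mid E\in\mathcal{E}\}$, allow joint recovery on $\overline{B}\otimes B$, and use $(E_i\otimes I_B)^\dagger(E_j\otimes I_B)=E_i^\dagger E_j\otimes I_B$.
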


\begin{theorem}[Error Correction Condition for Noisy Ebits]
    Let $\mathcal{C}\subseteq\overline{B}\otimes B$ be a quantum code with projector $P$. Then $\mathcal{C}$ corrects the set of errors $\mathcal{E}\subseteq\mathcal{L}\!\left(\overline{B}\otimes B\right)$ if and only if 
    \begin{equation}
        P\left(E_i^\dagger E_j\right)P=\lambda_{ij}P,
    \end{equation}
    for all $E_i,E_j\in\mathcal{E}$, where $\lambda_{ij}\in\mathbb{C}$.
\end{theorem}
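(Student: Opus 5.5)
The noisy ebit condition is the Knill--Laflamme condition (Theorem~\ref{thm:klcond}) applied to the code $\mathcal{C}$ as a subspace of the full space $\mathcal{H}=\overline{B}\otimes B$. So the plan is to show that the EA communication task in the noisy ebit model is operationally the same as ordinary error correction on $\mathcal{H}$, and then quote Theorem~\ref{thm:klcond}.

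\emph{Modeling.} The sender encodes using her share $\overline{B}$ and the pre-shared state on $B$. The joint encoded state $\ket{\widetilde{\psi}}$ then lies in $\mathcal{C}\subseteq\overline{B}\otimes B$.

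\emph{Noise.} In the noisy ebit model, errors may act on $\overline{B}$ during transmission and on $B$, either as imperfect preparation or as decoherence while $B$ waits. Operators acting separately before and after encoding can be commuted past the sender's encoding, since they act on $B$, which she does not touch. In the end, the receiver holds the whole system $\overline{B}\otimes B$ in the state $E\ket{\widetilde{\psi}}$ for some $E\in\mathcal{E}\subseteq\mathcal{L}(\overline{B}\otimes B)$.

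\emph{Recovery.} The receiver may apply an arbitrary recovery map $\mathcal{R}$ on $\overline{B}\otimes B$. So correctability in the EA sense, namely
\[
\mathcal{R}\!\left(E\ket{\widetilde{\psi}}\!\bra{\widetilde{\psi}}E^\dagger\right)\propto\ket{\widetilde{\psi}}\!\bra{\widetilde{\psi}}
\quad\text{for all } \ket{\widetilde{\psi}}\in\mathcal{C},
\]
is exactly the definition of $\mathcal{C}$ correcting $\mathcal{E}$ as an ordinary quantum code on $\mathcal{H}=\overline{B}\otimes B$.

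\emph{Conclusion.} Theorem~\ref{thm:klcond} then gives both directions directly: $PE_i^\dagger E_jP=\lambda_{ij}P$ for all $E_i,E_j\in\mathcal{E}$ if and only if such an $\mathcal{R}$ exists.

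The only real obstacle is justifying the modeling step: that every noise process allowed by the noisy ebit model really reduces to a single operator $E\in\mathcal{L}(\overline{B}\otimes B)$ acting on the encoded state. Noise on $B$ occurring before encoding is the delicate case. The key point is that the encoding unitary acts as $U_{\overline{B}}\otimes I_B$, so a pre-encoding error $F_B$ commutes with it and becomes $I_{\overline{B}}\otimes F_B$ applied afterward. Products of such errors with the channel errors on $\overline{B}$ remain in $\mathcal{L}(\overline{B}\otimes B)$. Once that is in place, the rest is a direct appeal to the Knill--Laflamme theorem, including its standard convention that $\mathcal{E}$ is closed under taking linear combinations of Kraus operators.
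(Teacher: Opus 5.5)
Your proposal is correct and follows the paper's reasoning. The paper gives no separate proof; it only remarks that this condition is equivalent to the Knill--Laflamme condition of Theorem~\ref{thm:klcond} applied to $\mathcal{C}$ viewed as a subspace of $\mathcal{H}=\overline{B}\otimes B$. Your modeling step, commuting pre-encoding errors on $B$ through $U_{\overline{B}}\otimes I_B$, is extra operational justification that the statement does not need, since it already takes $\mathcal{E}\subseteq\mathcal{L}(\overline{B}\otimes B)$ as operators acting directly on the code.
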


Note that the former result is a special case of the standard error correction condition in Theorem~\ref{thm:klcond}, while the latter is equivalent to it.
The results in the remainder of this section will typically be valid for both the noiseless and noisy ebit error models, while those in Section~\ref{sec:reduceentang} will typically apply only to the noiseless ebit error model.

The key insight in constructing EA codes from quantum codes is in identifying which subsets the receiver is allowed to have before the transmission of the information.
Corollary~\ref{thm:genclean}, a partial generalization of the Cleaning Lemma~\cite{BT2009}, says that if a set of qubits $B$ is correctable, i.e., the code can recover from an erasure of $B$, then no non-trivial logical operator has support on $B$; additionally, $\overline{B}$ has a full set of logical operators supported on it.
Essentially, it is not necessary for the sender to have access to $B$ at the time of encoding, as she can always construct a unitary $\widetilde{V}$ acting only on $\overline{B}$ that maps $\ket{\widetilde{0}}\mapsto\ket{\widetilde{\psi}}$, where $\ket{\widetilde{\psi}}$ is the encoded state she intends to send.

\begin{theorem}\label{thm:easendahead}
    Suppose $\mathcal{C}$ is an $\left(\!\left(n,K,d\right)\!\right)$ quantum code with a correctable set $B$ of $b$ qubits.
    Then by pre-sending $B$ to the receiver, $\mathcal{C}$ is also an $\left(\!\left(n-b,K,d;2^b\right)\!\right)$ entanglement-assisted code.
\end{theorem}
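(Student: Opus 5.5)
We will derive the EA code directly from part (3) of the Structure Theorem (Theorem~\ref{thm:structure}). Since $B$ is correctable, there is a reference system $R\cong\mathcal{C}$, an ancilla system $A$, an isometry $U_{\overline{B}}:RA\rightarrow\overline{B}$ and a state $\ket{\psi}_{AB}$ such that every codeword satisfies
\begin{equation*}
    \ket{\widetilde{i}}=\left(U_{\overline{B}}\otimes I_B\right)\left(\ket{i}_R\otimes\ket{\psi}_{AB}\right).
\end{equation*}
We take $\ket{\psi}_{AB}$ as the pre-shared bipartite resource: $A$ stays with the sender and $B$ (the $b$ qubits, of dimension $2^b$) goes to the receiver ahead of time. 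To encode a message $\ket{\phi}_R=\sum_i\alpha_i\ket{i}_R$, the sender applies $U_{\overline{B}}$ to $RA$, which involves only her systems. By linearity the joint state is then $\sum_i\alpha_i\ket{\widetilde{i}}=\ket{\widetilde{\phi}}\in\mathcal{C}$. After transmitting the $n-b$ qubits of $\overline{B}$, the receiver holds exactly the original codeword of $\mathcal{C}$, so the code still has dimension $K$ and uses $n-b$ channel qubits. The entanglement cost is recorded as $2^b$, the dimension of the receiver's share. This gives the parameters $\left(\!\left(n-b,K,d;2^b\right)\!\right)$. An alternative is Corollary~\ref{thm:genclean}: prepare $\ket{\widetilde{0}}$ in advance and let the sender act with a unitary $\widetilde{V}$ supported on $\overline{B}$. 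The isometry route is cleaner, though, because it avoids needing the sender to know the message when the state is prepared.

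Next we check the distance. In the noisy ebit model the physical state is literally a codeword of $\mathcal{C}$ on $\overline{B}\otimes B$. The Knill--Laflamme condition of Theorem~\ref{thm:klcond} is identical to the noisy-ebit condition, so every error set correctable by $\mathcal{C}$ remains correctable, and the distance is $d$. In the noiseless model the errors are restricted to $E\otimes I_B$ with $E\in\mathcal{L}(\overline{B})$. Any such operator of weight less than $d$ satisfies $P(E\otimes I_B)P=\lambda_E P$, so the distance is at least $d$.

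The main obstacle is the convention for the EA distance: should it be exactly $d$ or possibly larger? Restricting errors to $\overline{B}$ could increase the minimum weight of an undetectable error. We would address this by interpreting the statement as ``distance at least $d$'', or as distance $d$ in the noisy model where it is exact. We would also remark that $\ket{\psi}_{AB}$ need not be maximally entangled, which is why the parameter is written as the dimension $2^b$ of the receiver's share rather than a number of ebits. Finally, we would check that $\dim A\cdot K\le\dim\overline{B}$, which holds by the theorem, so the sender's encoding isometry is well defined.
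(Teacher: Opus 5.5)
Your proof is correct, but it follows a different route from the paper's proof of this theorem.

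\textbf{The paper's route.} It goes through Corollary~\ref{thm:genclean}. A codeword such as $\ket{\widetilde{0}}$ is prepared on all $n$ qubits, and its physical qubits $B$ are handed to the receiver. The sender then applies a unitary $\widetilde{V}$ supported on $\overline{B}$ that implements the logical map $\ket{\widetilde{0}}\mapsto\ket{\widetilde{\psi}}$.

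\textbf{Your route.} You use the isometric construction from part~(3) of the Structure Theorem. The paper reserves that construction for Theorem~\ref{thm:eauncondensed}, and notes that both strategies yield the original codewords, so the error analysis is identical.

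\textbf{What each approach buys.}
\begin{itemize}
\item Your version is more direct; Corollary~\ref{thm:genclean} is itself derived from the same Structure Theorem.
\item Before encoding, your sender holds only $A$, with $(\dim A)K\le 2^{n-b}$, rather than all $n-b$ qubits of $\overline{B}$.
\item Your encoding map $U_{\overline{B}}$ is fixed and independent of the message. By linearity it therefore handles unknown quantum inputs, including ones entangled with an outside reference.
\item Your version exposes $\ket{\psi}_{AB}$ explicitly, which is exactly what the later compression result (Theorem~\ref{thm:eacondensed}) needs.
\item The paper's route matches the phrase ``pre-sending $B$'' more literally: the receiver's share is part of an actual codeword, and only the logical-unitary statement is needed.
\item As the paper phrases it, however, $\widetilde{V}$ depends on $\ket{\psi}$. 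Encoding an unknown quantum message would require a coherent version, such as a logical swap between a message register and the code. That swap is again supported on $\overline{B}$ by the same Structure Theorem argument.
\end{itemize}

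\textbf{A small correction.} The reason you give for preferring the isometry is not quite the real distinction. You say it avoids needing the sender to know the message when the state is prepared. But $\ket{\widetilde{0}}$ is also message-independent, so preparation timing is not the issue. The actual difference is whether the encoding \emph{operation} depends on the message.

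Your caveat that in the noiseless-ebit model the distance is only guaranteed to be at least $d$ is accurate. It is consistent with the paper's remark that the resulting EA code corrects the same errors as $\mathcal{C}$.
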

\begin{proof}
    Follows from Corollary~\ref{thm:genclean} and the prior discussion.
\end{proof}

Here, we denote by $C$ the dimension of the subsystem held by the receiver, and write the parameters of a general EA code as $\left(\!\left(n,K,d;C\right)\!\right)$ in analogy to the parameters of nonadditive codes.
This allows for bipartite states beyond the $m$ shared Bell pairs in the EA stabilizer case, including for bipartite states that are not maximally entangled between the sender and receiver.
In the case of an $\left[\!\left[n,k,d;c\right]\!\right]$ EA stabilizer code, we have $C=2^c$.

An alternative strategy is suggested by Theorem~\ref{thm:structure}, the Structure Theorem of~\cite{CHKNN2025}.
For a density operator $\widetilde{\rho}$ on the codespace, the reduced state after tracing out a correctable subset $B$ has the form
\begin{equation}
    \Tr_B\!\left(\widetilde{\rho}\right)=\widetilde{\rho}_{\overline{B}} =U_{\overline{B}}\left(\rho_R\otimes\Gamma_A\right)U_{\overline{B}}^\dagger,
\end{equation}
where $U_{\overline{B}}:RA\rightarrow \overline{B}$ is an isometry and $\Gamma_A$ is some density matrix.
Then there is a purification $\ket{\psi}_{AB}$ of $\Gamma_A$ such that
\begin{equation}
    \ket{\widetilde{i}}=\left(U_{\overline{B}}\otimes I_B\right)\left(\ket{i}_R\otimes\ket{\psi}_{AB}\right)
\end{equation}
for all basis states $\ket{\widetilde{i}}$ of $\mathcal{C}$.

\begin{theorem}\label{thm:eauncondensed}
    Suppose $\mathcal{C}$ is an $\left(\!\left(n,K,d\right)\!\right)$ quantum code with a correctable set $B$ of $b$ qubits.
    Then there exists a bipartite state $\ket{\psi}_{AB}$ and encoding isometry $U_{\overline{B}}$ the sender and receiver can use that makes $\mathcal{C}$ an $\left(\!\left(n-b,K,d;2^b\right)\!\right)$ entanglement-assisted code.
\end{theorem}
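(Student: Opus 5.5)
The plan is to read the encoding off directly from item (3) of Theorem~\ref{thm:structure}. Since $B$ is correctable for $\mathcal{C}$, the Structure Theorem supplies a reference system $R\cong\mathcal{C}$, an ancilla $A$ with $(\dim A)(\dim\mathcal{C})\leq\dim\overline{B}$, an isometry $U_{\overline{B}}:RA\rightarrow\overline{B}$, and a state $\ket{\psi}_{AB}$ such that $\ket{\widetilde{i}}=(U_{\overline{B}}\otimes I_B)(\ket{i}_R\otimes\ket{\psi}_{AB})$ for every basis codeword. We take this $\ket{\psi}_{AB}$ as the pre-shared bipartite resource: the sender holds $A$ and the receiver holds $B$. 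The receiver's share therefore lives in a $2^b$-dimensional space, which is the entanglement parameter $C=2^b$. The encoding isometry is $U_{\overline{B}}$. To encode a message $\ket{\phi}=\sum_i\alpha_i\ket{i}_R$, the sender applies $U_{\overline{B}}$ to $R$ together with her share $A$. By linearity this gives $\sum_i\alpha_i\ket{\widetilde{i}}=\ket{\widetilde{\phi}}\in\mathcal{C}$, and she never touches $B$. She then transmits the $n-b$ qubits of $\overline{B}$.

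Next, I would check the parameters. The joint state held by the two parties after transmission is exactly a codeword of $\mathcal{C}$ on $n=(n-b)+b$ qubits, and the code dimension is unchanged at $K$. For distance in the noiseless ebit model, the errors $E\in\mathcal{L}(\overline{B})$ of weight less than $d$ appear as $E\otimes I_B$, which has the same weight when viewed as an $n$-qubit operator. So the original Knill--Laflamme conditions of Theorem~\ref{thm:klcond} for $\mathcal{C}$ imply the noiseless-ebit condition $P(E_i^\dagger E_j\otimes I_B)P=\lambda_{ij}P$ for all such errors. Hence the EA code has distance at least $d$ on the transmitted qubits. In the noisy ebit model the conditions coincide verbatim with those of $\mathcal{C}$, so the $n$-qubit error-correcting power is retained as well.

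The main subtlety is the bookkeeping of the distance parameter, rather than any construction. The codespace is literally $\mathcal{C}$, so correctability of every error set that $\mathcal{C}$ corrects is inherited immediately. What needs care is stating that $d$ in $((n-b,K,d;2^b))$ means errors of weight at most $d-1$ on $\overline{B}$ are detectable. It should not be read as a claim that the distance is exactly $d$. I would phrase the claim as ``at least $d$'' and note that it is attained in the sense of the original code's guarantee. A second minor point is that $A$ may be smaller than a full $b$-qubit system, since $\dim A$ need only satisfy $(\dim A)K\leq 2^{n-b}$. This does not matter, because the receiver's dimension is $2^b$ regardless, and $\ket{\psi}_{AB}$ can be regarded as a state on $A\otimes B$ with $A$ embedded in the sender's qubits.
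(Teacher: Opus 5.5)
Your proposal is correct and matches the paper's argument. The paper likewise reads the shared state $\ket{\psi}_{AB}$ and the sender's isometry $U_{\overline{B}}$ off item (3) of the Structure Theorem, and observes that the resulting code is literally $\mathcal{C}$, so it inherits the original code's error-correcting properties. One small tightening: your Knill--Laflamme step should be stated for products $E_i^\dagger E_j\otimes I_B$ of weight less than $d$ (detection), not for all pairs $E_i,E_j$ each of weight less than $d$; your closing remark about detectability already intends this reading.
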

\begin{proof}
    Follows from Theorem~\ref{thm:structure} and the prior discussion.
\end{proof}

Any EA code constructed using either of these strategies is identical to the original code, and is therefore able to correct the same set of errors.
For the remainder of the paper, we focus on codes constructed using the latter approach, as they generalize the well-established EA schemes given in~\cite{Bow2002,BDH2006,LB2012}.
Below we show how the Structure Theorem can be applied to produces some old and new EA codes.

\begin{example}
    In \cite[Example 4.7]{CHKNN2025}, the authors applied the Structure Theorem to the perfect $\left[\!\left[5,1,3\right]\!\right]$ stabilizer code, obtaining an encoding unitary and entangled state comprised of two ebits. These are identical to the encoding unitary and shared entanglement of the original $\left[\!\left[3,1,3;2\right]\!\right]$ EA code constructed by Bowen~\cite{Bow2002} from the $\left[\!\left[5,1,3\right]\!\right]$ code (see Example~\ref{ex:stabea5}).
\end{example}

To date, all examples of EA codes have been given either in the stabilizer formalism~\cite{BDH2006}, its generalization the codeword-stabilized (CWS) formalism~\cite{SHB2011,AJH2016}, or teleportation-style schemes~\cite{Gra2021}.
Theorem~\ref{thm:eauncondensed} allows for the construction of EA codes from any quantum code, including those outside of the stabilizer and codeword-stabilized frameworks.
As mentioned in Remark~\ref{thm:remarkGHW}, this construction generalizes the one given by Grassl et al. in~\cite[Theorem 10]{GHW2022}, which also works for pure codes outside the stabilizer framework.
However, our construction also works for impure codes (which is expanded on in Section~\ref{sec:reduceentang}), as well as for any correctable set rather than just those of size less than the minimum distance $d$.
Below we give examples of EA codes from permutation-invariant~\cite{PR2004} and XP stabilizer codes~\cite{NBV2015,WBB2022}, two families outside the normal stabilizer and CWS frameworks.

\begin{example}\label{ex:eapi4}
    Permutation-Invariant (PI) codes are invariant under an arbitrary permutation of the physical qubits. As such, they are subspaces of the symmetric subspace of $n$-qubits, which has as a basis the Dicke states $\left\{\ket{D_i^n}\right\}_{i=0}^n$, where
    \begin{equation*}
        \ket{D_i^n}=\frac{1}{\sqrt{\binom{n}{i}}}\sum\limits_{\wt\left(x\right)=i}\ket{x}
    \end{equation*}
    is a uniform superposition of the computational basis states with Hamming weight~$i$. Aydin et al.~\cite{AAB2024} defined a $\left(\!\left(4,2,2\right)\!\right)$ PI code with logical basis states
    \begin{align*}
        \ket{\tilde{0}} & = \frac{\sqrt{3}}{3}\ket{D_0^4}+\frac{\sqrt{6}}{3}\ket{D_3^4}, \\
        \ket{\tilde{1}} & = \frac{\sqrt{6}}{3}\ket{D_1^4}-\frac{\sqrt{3}}{3}\ket{D_4^4}.
    \end{align*}

    We follow Theorem~\ref{thm:structure} for an erasure of one physical qubit (WLOG we choose qubit 4). Choosing $\tilde{\rho}=\ket{\tilde{0}}\!\bra{\tilde{0}}$, we see that
    \begin{equation*}
        \Tr_{4}\!\left(\tilde{\rho}\right)=\frac{1}{2}\left(\ket{c_0}\!\bra{c_0}+\ket{c_1}\!\bra{c_1} \right),
    \end{equation*}
    where
    \begin{align*}
        \ket{c_0} & = \frac{\sqrt{6}}{3}\ket{D_0^3}+\frac{\sqrt{3}}{3}\ket{D_3^3}, \\
        \ket{c_1} & = \ket{D_2^3}.
    \end{align*}
    Therefore $A=\mathbb{C}^2$ and $\Gamma_A=\frac{1}{2}I_2$, so we can define the isometry $U_{\overline{B}}$ via its action on basis states as:
    \begin{equation*}
        \begin{array}{cccc} 
            R & A & U_{\overline{B}}  & \overline{B}  \\ \hline 
            \ket{0} & \left\{ \begin{array}{l}  \ket{0} \\ \ket{1} \end{array} \right. & \longmapsto & \left\{ \begin{array}{l} \frac{\sqrt{6}}{3}\ket{D_0^3}+\frac{\sqrt{3}}{3}\ket{D_3^3} \\ \ket{D_2^3}  \end{array} \right. \\ 
            \ket{1} & \left\{ \begin{array}{l}  \ket{0} \\ \ket{1} \end{array} \right. & \longmapsto & \left\{ \begin{array}{l} \ket{D_1^3} \\ \frac{\sqrt{3}}{3}\ket{D_0^3}-\frac{\sqrt{6}}{3}\ket{D_3^3}  \end{array} \right. \\  
        \end{array}
    \end{equation*}
    Then for $i=0,1$ we have
    \begin{equation*}
        \ket{\tilde{i}}=\left(U_{\overline{B}}\otimes I_B\right)\left(\ket{i}_R\otimes\ket{\psi}_{AB}\right),
    \end{equation*}
    where $\ket{\psi}_{AB}=\frac{1}{\sqrt{2}}\left(\ket{00}+\ket{11}\right)$ is the purification of $\Gamma_A$.
    
    Assuming the sender and receiver hold subsystems $A$ and $B$ respectively, the sender is able to encode the message along with her half of the ebit using the isometry $U_{\overline{B}}$.
    Therefore, we have a $\left(\!\left(3,2,2;2\right)\!\right)$ EA PI code.
\end{example}

%%% actual XP example

\begin{example}\label{ex:eaxp7}
The XP stabilizer formalism is a generalization of the standard stabilizer formalism in that the stabilizer generators are taken from the group $\left\langle \omega I,X,P\right\rangle^{\otimes n}$, where $\omega=e^{i\pi/N}$, $P=\diag\!\left(1,\omega^2\right)$, and $N$ is a fixed positive integer known as the precision. Webster et al.~\cite[Example 6.3]{WBB2022} gave a $\left(\!\left(7,8,2\right)\!\right)$ XP stabilizer code with precision $N=8$, with encoded basis states
\begin{align*}
    \ket{\widetilde{0}} & = \frac{1}{\sqrt{2}}\left(\ket{0000000}+\omega^{12}\ket{1111111}\right) \\
    \ket{\widetilde{1}} & = \frac{1}{\sqrt{2}}\left(\ket{0000111}+\ket{1111000}\right) \\
    \ket{\widetilde{2}} & = \frac{1}{\sqrt{2}}\left(\ket{0001011}+\omega^{14}\ket{1110100}\right) \\
    \ket{\widetilde{3}} & = \frac{1}{\sqrt{2}}\left(\ket{0001101}+\omega^{12}\ket{1110010}\right) \\
    \ket{\widetilde{4}} & = \frac{1}{\sqrt{2}}\left(\ket{0011110}+\ket{1100001}\right) \\
    \ket{\widetilde{5}} & = \frac{1}{\sqrt{2}}\left(\ket{0011001}+\omega^{8}\ket{1100110}\right) \\
    \ket{\widetilde{6}} & = \frac{1}{\sqrt{2}}\left(\ket{0010101}+\omega^{10}\ket{1101010}\right) \\
    \ket{\widetilde{7}} & = \frac{1}{\sqrt{2}}\left(\ket{0010011}+\omega^{12}\ket{1101100}\right) 
\end{align*}

Let the correctable set $B$ be qubit 7. Following Theorem~\ref{thm:structure}, we take the logical state $\ket{\widetilde{0}}$ and compute the reduced density matrix tracing out $B$:
\begin{equation}
    \rho_B = \Tr_{\overline{B}}\!\left(\ket{\widetilde{0}} \bra{\widetilde{0}}\right) = \frac{1}{2}(\ket{000000}\bra{000000} + \ket{111111}\bra{111111}),
\end{equation}
meaning we can take $\Gamma_A = \frac{1}{2} I_2$ and $A \cong \mathbb{C}^2$.
We can construct an isometry $U_{\overline{B}}: R A \to \overline{B}$ such that
\begin{equation}
    |\tilde{i}\rangle = (U_{\overline{B}} \otimes I_B)(|i\rangle_R \otimes |\psi\rangle_{AB}),
\end{equation}

where $|\psi\rangle_{AB} = \frac{1}{\sqrt2}(|00\rangle + |11\rangle)$ is the purification of $\Gamma_A$. Explicitly, $U_{\overline{B}}$ acts as:

\[
\begin{array}{cccc} 
R & A & U_{\overline{B}}  & \overline{B}  \\ \hline 
\ket{0} & \left\{ \begin{array}{l}  \ket{0} \\ \ket{1} \end{array} \right. & \longmapsto & \left\{ \begin{array}{l} \ket{000000} \\ \omega^{12}\ket{111111} \end{array} \right. \\ 
\ket{1} & \left\{ \begin{array}{l}  \ket{0} \\ \ket{1} \end{array} \right. & \longmapsto & \left\{ \begin{array}{l} \ket{111100} \\ \ket{000011} \end{array} \right. \\
\ket{2} & \left\{ \begin{array}{l}  \ket{0} \\ \ket{1} \end{array} \right. & \longmapsto & \left\{ \begin{array}{l} \omega^{14}\ket{111010} \\ \ket{000101} \end{array} \right. \\  
\ket{3} & \left\{ \begin{array}{l}  \ket{0} \\ \ket{1} \end{array} \right. & \longmapsto & \left\{ \begin{array}{l} \omega^{12}\ket{111001} \\ \ket{000110} \end{array} \right. \\  
\ket{4} & \left\{ \begin{array}{l}  \ket{0} \\ \ket{1} \end{array} \right. & \longmapsto & \left\{ \begin{array}{l} \ket{001111} \\ \ket{110000} \end{array} \right. \\  
\ket{5} & \left\{ \begin{array}{l}  \ket{0} \\ \ket{1} \end{array} \right. & \longmapsto & \left\{ \begin{array}{l} \omega^8\ket{110011} \\ \ket{001100} \end{array} \right. \\  
\ket{6} & \left\{ \begin{array}{l}  \ket{0} \\ \ket{1} \end{array} \right. & \longmapsto & \left\{ \begin{array}{l} \omega^{10}\ket{110101} \\ \ket{001010} \end{array} \right. \\ 
\ket{7} & \left\{ \begin{array}{l}  \ket{0} \\ \ket{1} \end{array} \right. & \longmapsto & \left\{ \begin{array}{l} \omega^{12}\ket{001001} \\ \ket{110110} \end{array} \right. \\ 
\end{array}
\]

Assuming the sender holds $A$ and the receiver holds $B$, the sender encodes using $U_{\overline{B}}$ and transmits the 6 qubits of $\overline{B}$ to the receiver, we obtain a $(\!(6,8,2;2)\!)$ EA XP code.
\end{example} 

\section{Receiver Share Compression via Degeneracy}\label{sec:reduceentang}

The  EA codes constructed using Theorem~\ref{thm:eauncondensed} are valid for both the noisy and noiseless ebit error models.
However, specifically for the noiseless ebit error model, this extra protection on the receiver's qubits may be seen as unnecessary and potentially wasteful if it requires more qubits than necessary.
A natural question to ask is whether the size of the receiver's share can be reduced.
In this section, we show that it is only possible to reduce the share size held by the receiver if the code in question is degenerate for the erasure channel on the correctable subset $B$.

In order to classify the density matrix $\Gamma_A$ on the ancilla system and its purifications, we will need two dichotomies from the coding literature: that between degeneracy and nondegeneracy, the other between purity and impurity.
Let $\mathcal{C}$ be a quantum code that can correct a linearly-independent error set $\mathcal{E}$.
By the Knill-Laflamme conditions, we have $PE_i^\dagger E_jP=\lambda_{ij}P$, for all $E_i,E_j\in \mathcal{E}$.
It is well known that the matrix $\lambda$ formed by the coefficients is Hermitian and therefore diagonalizable.
We say that the code is nondegenerate if $\rank\!\left(\lambda\right)=\left\lvert\mathcal{E}\right\rvert$, and degenerate if $\rank\!\left(\lambda\right)<\left\lvert\mathcal{E}\right\rvert$.

A code is pure if $PEP=0$ for all traceless errors $E\in\mathcal{E}$, and is impure otherwise.
For a replacer channel affecting $B$, this is equivalent to every encoded state in $\mathcal{C}$ having a maximally mixed reduced density operator of rank $2^b$ when $B$ is traced out.
For an arbitrary quantum code, purity implies nondegeneracy, giving the following trichotomy:

\begin{theorem}\label{thm:degenentanglement}
    Let $\mathcal{C}$ be a quantum code and $B$ a correctable subset of $b$ qubits. Then one of the following is true:
    \begin{enumerate}
        \item $\mathcal{C}$ is pure with respect to the erasure of $B$ and $\rho_{\overline{B}}$ is maximally mixed with rank $2^b$.
        \item $\mathcal{C}$ is impure and nondegenerate with respect to the erasure of $B$ and $\rho_{\overline{B}}$ is not maximally mixed but has rank $2^b$.
        \item $\mathcal{C}$ is degenerate with respect to the erasure of $B$ and $\rho_{\overline{B}}$ has rank less than $2^b$.
    \end{enumerate}
\end{theorem}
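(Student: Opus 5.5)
The plan is to reduce all three cases to properties of a single operator, the reduced state $\sigma_B := \Tr_{A}\!\left(\ket{\psi}\!\bra{\psi}_{AB}\right)$ supplied by Theorem~\ref{thm:structure}(3). I read "$\rho_{\overline{B}}$" in the statement as the ancilla factor $\Gamma_A$ of $\Tr_B(\tilde\rho)$, i.e.\ its mixedness and rank once the logical part $\rho_R$ is factored off. Since $\ket{\psi}_{AB}$ is a purification of $\Gamma_A$, the Schmidt decomposition shows that $\Gamma_A$ and $\sigma_B$ have the same nonzero spectrum. So it suffices to prove the trichotomy with $\sigma_B$ in place of $\rho_{\overline{B}}$. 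Moreover, by Theorem~\ref{thm:structure}(3), every encoded state $\ket{\widetilde{\phi}}$ has reduced state on $B$ equal to this same $\sigma_B$, independent of the logical state.

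\textbf{Step 1: express $\lambda$ through $\sigma_B$.} Take the error set $\mathcal{E}=\{E_1,\dots,E_{4^b}\}$ to be an orthonormal operator basis of $\mathcal{L}(B)$, for instance normalized Paulis. This is legitimate because $B$ is correctable. For any normalized codeword $\ket{\widetilde{\phi}}$, the Knill--Laflamme coefficients are
\begin{equation*}
\lambda_{ij}=\bra{\widetilde{\phi}}E_i^\dagger E_j\ket{\widetilde{\phi}}=\Tr\!\left(\sigma_B E_i^\dagger E_j\right)=\Tr\!\left(\left(E_i\sqrt{\sigma_B}\right)^\dagger\left(E_j\sqrt{\sigma_B}\right)\right).
\end{equation*}
Hence $\lambda$ is the Hilbert--Schmidt Gram matrix of the family $\{E_j\sqrt{\sigma_B}\}_j$.

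\textbf{Step 2: compute the rank of $\lambda$.} The rank of a Gram matrix equals the dimension of the span of the vectors. Because the $E_j$ span $\mathcal{L}(B)$, that span is $\mathcal{L}(B)\sqrt{\sigma_B}$. This is the space of operators whose row space lies in $\supp(\sigma_B)$, so its dimension is $2^b\rank(\sigma_B)$. Therefore $\rank(\lambda)=2^b\rank(\sigma_B)$. It equals $\lvert\mathcal{E}\rvert=4^b$ if and only if $\rank(\sigma_B)=2^b$. In other words, the code is nondegenerate for the erasure of $B$ exactly when $\sigma_B$ (equivalently $\Gamma_A$) has full rank $2^b$, and degenerate exactly when that rank is smaller. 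This already gives case~(3) and separates it from cases~(1) and~(2).

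\textbf{Step 3: characterize purity.} Purity means $PEP=0$ for every traceless $E\in\mathcal{L}(B)$, which is equivalent to $\Tr(\sigma_B E)=0$ for all traceless $E$. This holds if and only if $\sigma_B=I_B/2^b$. In that case $\sigma_B$ has full rank, so purity implies nondegeneracy, giving case~(1). A nondegenerate but impure code has $\sigma_B$ of rank $2^b$ that is not $I/2^b$, giving case~(2). Transferring back to $\Gamma_A$ via the equal spectra yields the stated conclusions.

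The step I expect to need the most care is Step 2, specifically the dimension count $\dim\mathcal{L}(B)\sqrt{\sigma_B}=2^b\rank\sigma_B$. It must also be stated precisely that degeneracy is measured with respect to a full operator basis on $B$, so that $\lvert\mathcal{E}\rvert=4^b$. Matching the paper's notation $\rho_{\overline{B}}$ to $\Gamma_A$ is the other point I will state explicitly.
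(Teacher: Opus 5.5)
Your proof is correct. It starts from the same identity as the paper, $\lambda_{ij}=\Tr(\varrho_B E_i^\dagger E_j)$, where $\varrho_B=\sigma_B$ is the codeword-independent marginal on $B$. It then extracts the rank of $\lambda$ differently.

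\textbf{The paper's route.} The paper diagonalizes $\lambda$ by a unitary change of error basis and only distinguishes full rank from not full rank. If $\varrho_B$ is invertible, each eigenvalue $\norm{\varrho_B^{1/2}F_i^\dagger}^2$ is positive. Otherwise it exhibits nonzero $F$ with $\Tr(\varrho_B F^\dagger F)=0$. This tacitly uses that $\lambda$ is positive semidefinite, since such $F$ need not be among the diagonalizing $F_i$.

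\textbf{Your route.} Your Gram-matrix count gives the exact formula $\rank\lambda=2^b\rank\sigma_B$, which settles both directions at once. With $r=\rank\varrho_B$, it also shows that the degenerate directions are exactly the operators vanishing on $\supp\varrho_B$, a space of dimension $2^b(2^b-r)$.

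\textbf{What this buys.} Your count corrects a slip in the paper. Its witnesses $\left(\begin{smallmatrix}0&0\\F&0\end{smallmatrix}\right)$ sit in the transposed block, so they do not annihilate $\supp\varrho_B$. Its count $r(2^b-r)$ is also too small. One valid witness is still enough for its qualitative conclusion. You also derive purity $\Leftrightarrow\sigma_B=I_B/2^b$ directly from $PEP=\Tr(\sigma_B E)\,P$, which holds on all of $\mathcal{L}(B)$ by correctability; the paper instead cites this as known. Finally, you make explicit a bridge the paper leaves implicit, between the marginal on $B$ that its proof controls and the operator named in the statement. $\Gamma_A$, $\sigma_B$, and $\rho_{\overline{B}}$ of any pure codeword all have the same nonzero spectrum, so your reading of the statement is sound.
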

\begin{proof}

For Case (1), it is well known that the code being pure is equivalent to having a maximally mixed marginal for the respective erasure (see ~\cite{GHW2022}).

For Case (3), we first diagonalize the Knill-Laflamme matrix $\lambda$ by finding an orthonormal basis of error operators $\left\{F_i\right\}$ such that $\lambda_{ij}=0$ if $i\neq j$.
Note that if $\varrho=P/K$ is the maximally mixed state of the codespace, then
\begin{equation}
    \lambda_{ij}=\Tr\!\left(\varrho \left(I\otimes F_i^\dagger\right)\left(I\otimes F_j\right)\right)=\Tr\!\left(\varrho_B F_i^\dagger F_j\right),
\end{equation}
and for the diagonal terms we have
\begin{equation}
    \lambda_{ii}=\norm{\varrho_B^{1/2}F_i^\dagger}^2.
\end{equation}

Suppose that $\varrho_B$ is full rank.
Then $\varrho_B^{1/2}$ is invertible and $\varrho_B^{1/2}F_i^\dagger=0$ implies $F_i^\dagger=0$.
However, this can't happen because $\left\{F_i\right\}$ was chosen to be an orthonormal basis.
Therefore $\lambda$ is also full rank.

Now suppose that $\rank\!\left(\varrho_B\right)=r<2^b$.
We can choose our basis so that
\begin{equation}
    \varrho_B^{1/2}=\begin{pmatrix}
        \varrho_B^{1/2} & 0 \\
        0 & 0
    \end{pmatrix},
\end{equation}
then any error of the form
\begin{equation}
    F_i=\begin{pmatrix}
        0 & 0 \\
        F & 0
    \end{pmatrix}
\end{equation}
satisfies $\varrho_B^{1/2}F_i^\dagger=0$, hence $\lambda_{ii}=0$. Since there are $r\left(2^b-r\right)$ such linearly independent operators, it follows that $\lambda$ is not full rank.

Case (2) contains all scenarios not already covered.
\end{proof}

For the special case of stabilizer codes, purity and nondegeneracy are equivalent, so Case (2) cannot occur.
This gives a nice characterization of pure/nondegenerate stabilizer codes in terms of the shared bipartite state:

\begin{corollary}
    Let $\mathcal{C}$ be a stabilizer code and $B$ a correctable subset of $b$ qubits. Then we can choose $\ket{\psi}_{AB}$ to be $b$ ebits if and only if $\mathcal{C}$ has no stabilizers with support on $B$.
\end{corollary}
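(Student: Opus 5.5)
We will prove both directions by combining Theorem~\ref{thm:degenentanglement} with the stabilizer fact that purity and nondegeneracy coincide. The central object is the marginal $\varrho_B=\Tr_{\overline{B}}(P/K)$. By Theorem~\ref{thm:structure}, every codeword has the form $(U_{\overline{B}}\otimes I_B)(\ket{i}_R\ket{\psi}_{AB})$, so $\varrho_B=\Tr_A\ket{\psi}\!\bra{\psi}_{AB}$. The nonzero spectrum of $\varrho_B$ coincides with that of $\Gamma_A$. Hence $\ket{\psi}_{AB}$ can be chosen, up to a local unitary on $A$ absorbed into $U_{\overline{B}}$ and a unitary on $B$, to be $b$ ebits exactly when $\varrho_B=I_B/2^b$. (Here $\dim A=2^b$ is allowed, since the Schmidt rank of $\ket{\psi}_{AB}$ is at most $\dim A$.) This is the maximally mixed marginal condition of Case (1) of Theorem~\ref{thm:degenentanglement}, i.e.\ purity with respect to the erasure of $B$.

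Next we compute $\varrho_B$ for a stabilizer code. The projector is $P=\frac{1}{|\mathcal{S}|}\sum_{S\in\mathcal{S}}S$, with $K=2^n/|\mathcal{S}|$, so
\begin{equation*}
\varrho_B=\frac{1}{2^n}\sum_{S\in\mathcal{S}}\Tr_{\overline{B}}(S).
\end{equation*}
The partial trace kills every $S$ acting nontrivially on $\overline{B}$. Each $S$ supported on $B$ contributes $2^{n-b}S|_B$. Therefore
\begin{equation*}
\varrho_B=\frac{1}{2^b}\sum_{S\in\mathcal{S},\,\supp(S)\subseteq B}S|_B.
\end{equation*}
The Paulis $S|_B$ are traceless and linearly independent unless $S=I$. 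So $\varrho_B=I_B/2^b$ if and only if the only stabilizer supported on $B$ is the identity, which is what ``no stabilizers with support on $B$'' means (nontrivial ones).

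This direct computation already gives both directions. It bypasses the need to invoke the pure/nondegenerate equivalence, though we would remark that it is consistent with it: a nontrivial $S$ supported on $B$ satisfies $PSP=\pm P\neq0$ with $S$ traceless, so the code is impure and, for stabilizer codes, degenerate. That lands us in Case (3), where the rank of $\varrho_B$ drops below $2^b$ and $\ket{\psi}_{AB}$ cannot be $b$ ebits.

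The main obstacle is the first step: justifying that the choice of $\ket{\psi}_{AB}$ is constrained only up to local unitaries and the choice of $A$. Concretely, we must show that any valid $\ket{\psi}_{AB}$ from Theorem~\ref{thm:structure} has $B$-marginal equal to $\varrho_B$. This follows by tracing out $\overline{B}$ in the codeword decomposition, since $U_{\overline{B}}$ is an isometry, averaged over $i$.
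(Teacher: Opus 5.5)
Your proof is correct, and it gets there more directly than the paper does.

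The paper treats the corollary as a specialization of Theorem~\ref{thm:degenentanglement}. Having $b$ ebits means $\Gamma_A$ is maximally mixed of rank $2^b$, which is Case (1), purity for the erasure of $B$. For stabilizer codes the paper then quotes, rather than proves, the fact that purity and nondegeneracy coincide, so Case (2) is empty. The link between purity for $B$ and the absence of stabilizers supported on $B$ is left implicit.

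Your first step is the same as the paper's. Your isometry and Schmidt argument, that every admissible $\ket{\psi}_{AB}$ has $B$-marginal $\varrho_B$, is exactly what is needed there. After that you replace the paper's reasoning with the explicit formula $\varrho_B=2^{-b}\sum_{S\in\mathcal{S}_B}S|_B$.

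Your route makes the proof self-contained, and it actually proves the stabilizer fact the paper assumes. Your formula gives $\varrho_B=2^{s-b}\Pi_B$, where $\Pi_B$ is the rank-$2^{b-s}$ projector onto the code stabilized by $\mathcal{S}_B|_B$. So $\varrho_B$ is either $I_B/2^b$ or rank-deficient, and Case (2) cannot occur. The same formula gives Schmidt rank $2^{b-s}$, so Corollary~\ref{thm:stabcompressed} follows for free. The paper's route, in turn, presents the corollary as the stabilizer instance of a trichotomy valid for arbitrary codes, where an impure but nondegenerate Case (2) can genuinely arise.

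One small tightening: you do not need a unitary on $B$. Since $(I_A\otimes V)\ket{\Phi}^{\otimes b}=(V^{T}\otimes I_B)\ket{\Phi}^{\otimes b}$, the whole change of basis can be absorbed into $U_{\overline{B}}$. This matches the $I_B$ in the Structure Theorem decomposition.
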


Theorem~\ref{thm:degenentanglement} completely determines when the receiver's share $B$ can be compressed: precisely when the original code is degenerate for the erasure of $B$.
Using this knowledge, we can modify Theorem~\ref{thm:eauncondensed} to allow for a smaller receiver share size.

\begin{theorem}\label{thm:eacondensed}
    Suppose $\mathcal{C}$ is an $\left(\!\left(n,K,d\right)\!\right)$ quantum code with a correctable set $B$ of $b$ qubits, and let $\Gamma_A$ have a purification $\ket{\psi'}_{AB}$ with Schmidt rank $C$.
    If the state $\ket{\psi'}_{AB}$ is divided between the sender and receiver, then there exists an $\left(\!\left(n-b,K,d;C\right)\!\right)$ entanglement-assisted code for the noiseless ebit error model.
    Additionally, $C<2^b$ if and only if the code $\mathcal{C}$ is degenerate for the erasure of $B$.
\end{theorem}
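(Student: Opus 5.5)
We will build the code directly from the Structure Theorem (Theorem~\ref{thm:structure}) and then apply Theorem~\ref{thm:degenentanglement} to count Schmidt ranks. By item (3) of Theorem~\ref{thm:structure}, there is an isometry $U_{\overline{B}}:RA\rightarrow\overline{B}$ and a state $\ket{\psi}_{AB}$ with $\ket{\widetilde{i}}=\left(U_{\overline{B}}\otimes I_B\right)\left(\ket{i}_R\otimes\ket{\psi}_{AB}\right)$. Here $\Gamma_A=\Tr_B\ket{\psi}\!\bra{\psi}$.

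Let $C=\rank\Gamma_A$. Write the Schmidt decomposition $\ket{\psi}_{AB}=\sum_{k=1}^{C}\sqrt{\mu_k}\ket{a_k}_A\ket{b_k}_B$. Any purification of $\Gamma_A$ has Schmidt rank exactly $C$, so set $\ket{\psi'}_{AB'}=\sum_{k}\sqrt{\mu_k}\ket{a_k}_A\ket{k}_{B'}$ with $B'\cong\mathbb{C}^C$. The two purifications differ by an isometry $W:B'\rightarrow B$, $\ket{k}\mapsto\ket{b_k}$, acting only on the receiver's side: $\ket{\psi}_{AB}=(I_A\otimes W)\ket{\psi'}_{AB'}$.

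The protocol is then the following. The sender holds $A$ and the receiver holds $B'$. The sender applies $U_{\overline{B}}$ to $\ket{i}_R$ together with her share $A$, and transmits $\overline{B}$ over the channel. On receipt, the receiver applies $W$ to $B'$, which produces exactly $\ket{\widetilde{i}}\in\mathcal{C}$ up to the error on $\overline{B}$. Since $W$ acts only on the receiver's noiseless share, it commutes with any error $E\in\mathcal{L}(\overline{B})$. The receiver then runs the original recovery of $\mathcal{C}$.

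Equivalently, the compressed code $\mathcal{C}'=(I\otimes W^\dagger)\mathcal{C}$ has projector $P'=(I\otimes W^\dagger)P(I\otimes W)$. Then
\begin{equation*}
P'\left(E_i^\dagger E_j\otimes I_{B'}\right)P'=(I\otimes W^\dagger)P\left(E_i^\dagger E_j\otimes WW^\dagger\right)P(I\otimes W).
\end{equation*}
This step is the main technical point: $WW^\dagger$ is a projector rather than $I_B$. However, $P(I\otimes WW^\dagger)=P$, because every codeword has $B$-support inside the Schmidt span of $\{\ket{b_k}\}$. Hence the expression reduces to $\lambda_{ij}P'$. By the noiseless-ebit error correction condition, every error correctable on $\overline{B}$ for $\mathcal{C}$ remains correctable, so the distance is still $d$. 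This gives an $\left(\!\left(n-b,K,d;C\right)\!\right)$ code. Errors on $B'$ are no longer guaranteed to be correctable, which is why the statement is restricted to the noiseless ebit model.

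For the final claim, the Schmidt rank of $\ket{\psi}$ equals $\rank\Gamma_A$, which is also the rank of the reduced state on $B$ of the codewords. Here I am identifying the marginal $\rho_{\overline{B}}$ appearing in Theorem~\ref{thm:degenentanglement} with the $B$-marginal $\varrho_B$ used in its proof; equal Schmidt ranks justify the identification. Theorem~\ref{thm:degenentanglement} then gives the equivalence directly. In Cases (1) and (2) the rank is $2^b$, and in Case (3) it is $<2^b$. So $C<2^b$ if and only if $\mathcal{C}$ is degenerate for the erasure of $B$.
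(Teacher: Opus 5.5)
Your proposal is correct and follows the paper's proof. Both invoke item (3) of the Structure Theorem, replace $\ket{\psi}_{AB}$ by a minimal purification $\ket{\psi'}_{AB'}$ related to it by an isometry on the receiver's side, and have the receiver apply that isometry before running the standard decoder of $\mathcal{C}$. You go slightly further than the paper in two places, both of which it leaves implicit: you check the noiseless-ebit Knill--Laflamme condition directly via $P(I\otimes WW^\dagger)=P$, and you spell out how the ``if and only if'' follows from Theorem~\ref{thm:degenentanglement}, using that the $B$-marginal of the code has rank $C$.
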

\begin{proof}
    By the Structure Theorem, there exists an isometry $U_{\overline{B}}: RA\rightarrow \overline{B}$ such that
    \begin{equation}
        \ket{\tilde{i}}=\left(U_{\overline{B}}\otimes I_B\right)\left(\ket{i}_R \otimes \ket{\psi}_{AB}\right).
    \end{equation}
    As all purifications of $\Gamma_A$ are equivalent up to an isometry on the purifying system, there exists a purification $\ket{\psi'}_{AB'}$ and isometry $V:B'\rightarrow B$ such that $\dim\!\left(B'\right)=C$ and $\ket{\psi}_{AB}=\left(I_A\otimes V_{B'}\right)\ket{\psi'}_{AB'}$.
    Therefore, upon receiving $\overline{B}$ from the sender, the receiver can apply $V$ to his share $B'$ and then recover using a standard decoder for the code $\mathcal{C}$.
\end{proof}

One special edge case is when the codespace is comprised of product states between $\overline{B}$ and $B$, so that an encoded state $\ket{\widetilde{\phi}}\in\mathcal{C}$ has the form
\begin{equation}
    \ket{\widetilde{\phi}}=\ket{\widetilde{\phi}'}_{\overline{B}}\otimes\ket{\varphi}_B,
\end{equation}
where $\ket{\varphi}_B$ is independent of the information encoded.
Codes such as these often appear in code propagation rules~\cite[Theorem 6]{CRSS1998}, and are always degenerate for an erasure of $B$.
In this case, the shared bipartite state $\ket{\psi'}$ from Theorem~\ref{thm:eacondensed} is also a product state, and the resulting EA code is now effectively a standard quantum code that has had its useless subsystem $B$ removed.

Using the construction in Theorem~\ref{thm:eacondensed}, we look at two examples where the codes are degenerate for the channels erasing their respective correctable subsets.

\begin{example}\label{ex:eapi7}
    Kubischta and Teixeira~\cite{KT2023} defined a $\left(\!\left(7,2,3\right)\!\right)$ PI code with logical basis states
    \begin{align*}
        \ket{\tilde{0}} & = \frac{\sqrt{15}}{8}\ket{D_0^7}+\frac{\sqrt{7}}{8}\ket{D_2^7}+\frac{\sqrt{21}}{8}\ket{D_4^7}-\frac{\sqrt{21}}{8}\ket{D_6^7}, \\
        \ket{\tilde{1}} & = -\frac{\sqrt{21}}{8}\ket{D_1^7}+\frac{\sqrt{21}}{8}\ket{D_3^7}+\frac{\sqrt{7}}{8}\ket{D_5^7}+\frac{\sqrt{15}}{8}\ket{D_7^7}.
    \end{align*}
    
    We follow the Structure Theorem for an erasure of two physical qubits (WLOG we choose qubits 6 and 7). Choosing $\tilde{\rho}=\ket{\tilde{0}}\!\bra{\tilde{0}}$, we see that
    \begin{equation*}
        \Tr_{6,7}\!\left(\tilde{\rho}\right)=\frac{1}{3}\left(\ket{c_0}\!\bra{c_0}+\ket{c_1}\!\bra{c_1}+\ket{c_2}\!\bra{c_2} \right),
    \end{equation*}
    where
    \begin{align*}
        \ket{c_0} & = -\frac{\sqrt{10}}{8}\ket{D_1^5}-\frac{3}{4}\ket{D_3^5}+\frac{3\sqrt{2}}{8}\ket{D_5^5}, \\
        \ket{c_1} & = \frac{\sqrt{70}}{14}\ket{D_0^5}+\frac{3\sqrt{14}}{14}\ket{D_4^5}, \\
        \ket{c_2} & = \frac{\sqrt{6}}{3}\ket{D_0^5}+\frac{\sqrt{3}}{3}\ket{D_2^5}.
    \end{align*}
    Therefore $A=\mathbb{C}^3$ and $\Gamma_A=\frac{1}{3}I_3$, so we can define the isometry $U_{\overline{B}}$ via its action on basis states as:
    \begin{equation*}
        \begin{array}{cccc} 
            R & A & U_{\overline{B}}  & \overline{B}  \\ \hline 
            \ket{0} & \left\{ \begin{array}{l}  \ket{D_0^2} \\ \ket{D_1^2} \\ \ket{D_2^2} \end{array} \right. & \longmapsto & \left\{ \begin{array}{l} \frac{3\sqrt{5}}{8}\ket{D_0^5}+\frac{\sqrt{10}}{8}\ket{D_2^5}+\frac{3}{8}\ket{D_4^5} \\ \frac{\sqrt{10}}{8}\ket{D_1^5}+\frac{3}{4}\ket{D_3^5}-\frac{3\sqrt{2}}{4}\ket{D_5^5} \\ \frac{1}{8}\ket{D_0^5}+\frac{3\sqrt{2}}{8}\ket{D_2^3}-\frac{3\sqrt{5}}{8}\ket{D_4^5} \end{array} \right. \\ 
            \ket{1} & \left\{ \begin{array}{l}  \ket{D_0^2} \\ \ket{D_1^2} \\ \ket{D_2^2} \end{array} \right. & \longmapsto & \left\{ \begin{array}{l} -\frac{3\sqrt{5}}{8}\ket{D_1^5}+\frac{3\sqrt{2}}{8}\ket{D_3^5}+\frac{1}{8}\ket{D_5^5} \\ -\frac{3\sqrt{2}}{8}\ket{D_0^5}+\frac{3}{4}\ket{D_2^5}+\frac{\sqrt{10}}{8}\ket{D_4^5} \\ \frac{3}{8}\ket{D_1^5}+\frac{\sqrt{10}}{8}\ket{D_3^5}+\frac{3\sqrt{5}}{8}\ket{D_5^5} \end{array} \right. \\  
        \end{array}
    \end{equation*}
    Then for $i=0,1$ we have
    \begin{equation*}
        \ket{\tilde{i}}=\left(U_{\overline{B}}\otimes I_B\right)\left(\ket{i}_R\otimes\ket{\psi}_{AB}\right),
    \end{equation*}
    where $\ket{\psi}_{AB}=\frac{1}{\sqrt{3}}\left(\ket{D_0^2}_A\ket{D_0^2}_B+\ket{D_1^2}_A\ket{D_1^2}_B+\ket{D_2^2}_A\ket{D_2^2}_B\right)$ is the purification of $\Gamma_A$.
    This gives a shared bipartite state not equal to two ebits, but rather a maximally entangled state between two symmetric subspaces on two qubits.
    
    Assuming the sender and receiver hold subsystems $A$ and $B$ respectively, the sender is able to encode the message along with her half of the entangled pair using the isometry $U_{\overline{B}}$.
    Therefore we have a $\left(\!\left(5,2,3;3\right)\!\right)$ EA PI code.
\end{example}

In general, any PI code of distance $d>2$ will be degenerate for erasures of between 2 and $d-1$ qubits.
Let $B$ be a correctable set with a SWAP operator supported on it, let $\mathcal{E}$ be the set of linearly correctable errors for the erasure of $B$, and let $F=\mathrm{SWAP}-I=\sum_j f_j E_j$, where $E_j\in\mathcal{E}$.
Then we have $FP=0$, since the codespace is invariant under the SWAP operator.
For any $E_i$, we have
\begin{equation}
    0=PE_i^\dagger FP=\sum_j f_j PE_i^\dagger E_jP=\left(\sum_j f_j \lambda_{ij}\right)P,
\end{equation}
causing $\sum_j f_j \lambda_{ij}=0$ for any $i$.
Therefore $\lambda$ is not full rank.
This implies that some amount of compression is always an option if the receiver is willing to give up permutation-invariance on his share.

\begin{remark}
    Note that even if the sender and receiver do not share the bipartite state $\ket{\psi}_{AB}$ with Schmidt rank $C$ required for their particular code, there is an LOCC transformation that takes $\left\lceil\log_2 C\right\rceil$ ebits shared between the sender and receiver and transforms them into $\ket{\psi}_{AB}$.
    This is due to Nielsen's majorization theorem~\cite[Theorem 1]{Nie1999}, as $\left\lceil\log_2 C\right\rceil$ maximally entangled states are majorized by any state with Schmidt rank $C$.
\end{remark}

When we restrict Theorem~\ref{thm:eacondensed} to stabilizer codes, we obtain a characterization in terms of subgroups of the stabilizer group:

\begin{corollary}\label{thm:stabcompressed}
    Let $\mathcal{C}$ be an $\left[\!\left[n,k,d\right]\!\right]$ stabilizer code. Let $B$ be a correctable set of qubits on $\mathcal{C}$, and let $\mathcal{S}_B$ be the subgroup of $\mathcal{S}$ with support on $B$, that is
    \begin{equation*}
        \mathcal{S}_B=\left\{S\in\mathcal{S}\mid \supp\!\left(S\right)\subseteq B\right\}.
    \end{equation*}
    Let $\left\lvert B\right\rvert=b$ and $\left\lvert \mathcal{S}_B\right\rvert=2^s$.
    Then there exists an entanglement-assisted code with parameters $\left[\!\left[n-b,k,d;b-s\right]\!\right]$.
\end{corollary}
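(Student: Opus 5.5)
The plan is to apply Theorem~\ref{thm:eacondensed} and show that, for a stabilizer code, the Schmidt rank of a purification of $\Gamma_A$ equals $2^{b-s}$. By the Structure Theorem, $\Tr_B(\widetilde{\rho}) = U_{\overline{B}}(\rho_R\otimes\Gamma_A)U_{\overline{B}}^\dagger$, so the minimal purification of $\Gamma_A$ has Schmidt rank $\rank(\Gamma_A)$. Since $\ket{\widetilde{i}}=(U_{\overline{B}}\otimes I_B)(\ket{i}_R\ket{\psi}_{AB})$ and $U_{\overline{B}}$ is an isometry, the reduced state on $B$ of any codeword equals $\Tr_A\ket{\psi}\!\bra{\psi}_{AB}$. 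This has the same nonzero spectrum as $\Gamma_A$. It therefore suffices to compute $\rank(\varrho_B)$ for a single codeword, or equivalently for the maximally mixed code state $\varrho=P/2^k$.

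For a stabilizer code, $P = 2^{-(n-k)}\sum_{S\in\mathcal{S}} S$. Taking the partial trace over $\overline{B}$ kills every stabilizer whose restriction to $\overline{B}$ is not the identity, since non-identity Paulis are traceless. The survivors are exactly the elements of $\mathcal{S}_B$. Hence
\begin{equation*}
\varrho_B \propto \sum_{S\in\mathcal{S}_B} S|_B = 2^{s}\,\Pi_B,
\end{equation*}
where $\Pi_B$ is the projector onto the joint $+1$ eigenspace of $\mathcal{S}_B$ restricted to $B$.

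Next I check that $\mathcal{S}_B$ is a valid abelian group not containing $-I$ on $b$ qubits, with $s$ independent generators. This is inherited from $\mathcal{S}$, and independence holds because $\mathcal{S}_B$ is a subgroup of size $2^s$. It follows that $\rank(\Pi_B)=2^{b-s}$, so $\rank(\Gamma_A)=2^{b-s}=C$.

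Theorem~\ref{thm:eacondensed} then yields an EA code with parameters $((n-b,2^k,d;2^{b-s}))$ in the noiseless ebit model. To get the stabilizer form $[[n-b,k,d;b-s]]$, I need the shared state to be $b-s$ ebits. Because $\varrho_B$ is proportional to a projector, $\Gamma_A$ is maximally mixed on its support. Its purification is therefore maximally entangled with Schmidt rank $2^{b-s}$, which is equivalent under local unitaries to $b-s$ Bell pairs. Absorbing the sender-side unitary into $U_{\overline{B}}$ and the receiver-side isometry into the decoder $V$ from the proof of Theorem~\ref{thm:eacondensed} gives exactly $b-s$ ebits.

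The main subtlety is justifying that the encoded map remains Clifford/stabilizer-form, so that the stabilizer-style parameters are legitimate. I would handle this by choosing a Clifford on $B$ that maps the generators of $\mathcal{S}_B$ to $Z_1,\dots,Z_s$. This splits $B$ into $s$ qubits in a fixed product state, which can be absorbed into the encoding as ancillas, and $b-s$ qubits maximally entangled with $\overline{B}$. That decomposition gives the ebits explicitly.
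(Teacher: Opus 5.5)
Your proposal is correct and follows essentially the route the paper intends. You specialize Theorem~\ref{thm:eacondensed} to stabilizer codes, where the Schmidt rank is fixed by $\mathcal{S}_B$, and use a Clifford on $B$ taking $\mathcal{S}_B$ to $Z_1,\dots,Z_s$ to expose the $b-s$ ebits, as in the paper's Steane-code example via Fattal et al. Your explicit computation $\varrho_B=\Pi_B/2^{b-s}$ from $P=2^{-(n-k)}\sum_{S\in\mathcal{S}}S$ makes rigorous the rank count the paper leaves implicit. The Clifford form of the sender's encoding (pairing the partially supported generators into anticommuting pairs) is left at the same sketch level in both your proposal and the paper.
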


\begin{example}
    The $\left[\!\left[7,1,3\right]\!\right]$ Steane code~\cite{Ste1996} has the following stabilizer generators:
    \begin{equation*}
        \begin{array}{ccccccc}
            I & I & I & X & X & X & X \\
            X & I & X & I & X & I & X \\
            I & X & X & I & I & X & X \\
            I & I & I & Z & Z & Z & Z \\
            Z & I & Z & I & Z & I & Z \\
            I & Z & Z & I & I & Z & Z \\
        \end{array}
    \end{equation*}
    The code has multiple 4-qubit subsets that are correctable, for example, $B=\left\{4,5,6,7\right\}$, all of which have a pair of stabilizer generators supported on them; therefore, the code is degenerate for the erasure of $B$.
    Since $\left\lvert\mathcal{S}_B\right\rvert=2^2$, by Corollary~\ref{thm:stabcompressed} we should be able to transform the code into a $\left[\!\left[3,1,3;2\right]\!\right]$ EA code.
    
    This can be done by picking a purification of $\Gamma_A$ that minimizes the number of qubits held by the receiver.
    Following Fattal et al.~\cite{FCYBC2004}, stabilizers generators that have partial support on $B$ can be paired up into anticommuting pairs isomorphic to $X_i$ and $Z_i$, while those fully supported on $\mathcal{S}_B$ map to other $Z_i$ operators.
    Therefore similar to Example~\ref{ex:stabea5}, we can pick a unitary $V$ acting only on $B$ that decodes only that part:
    \begin{equation*}
        \begin{array}{ccc|cccc}
            I & I & I & I & I & Z & I \\
            X & I & X & X & I & I & I \\
            I & X & X & I & X & I & I \\
            I & I & I & I & I & I & Z \\
            Z & I & Z & I & Z & I & I \\
            I & Z & Z & Z & I & I & I \\
        \end{array}
        \overset{punc}{\longmapsto}
        \begin{array}{ccc|cc}
            X & I & X & X & I \\
            I & X & X & I & X \\
            Z & I & Z & I & Z \\
            I & Z & Z & Z & I \\
        \end{array}
    \end{equation*}
    Since the last two qubits are now disentangled from the rest of the codeword, we can ``puncture'' and discard them.
\end{example}

As noted previously, this can be seen as stripping away the error correction protection on the receiver's qubit.
Therefore, it can be seen in a sense as a reversal of the scheme proposed by Lai and Brun~\cite{LB2012}, in which the receiver's unencoded qubits are encoded into another quantum code.

\section{Conclusion}

Entanglement-assisted codes form a natural extension of quantum subspace codes that allow the sender and receiver to make use of shared entanglement.
Previously, the only known EA codes were constructed via the stabilizer framework and its direct generalization, the CWS framework.
In this paper, our main result was to show that any quantum code may be turned into an EA code by having the receiver hold a correctable subset of the code ahead of time, or by choosing a shared bipartite state and local encoding unitary using the Structure Theorem of~\cite{CHKNN2025}.
Furthermore, we show a direct connection between code degeneracy and being able to reduce the receiver's share size.

Generalizing EA codes outside the stabilizer framework opens the door for other generalizations.
While we restricted ourselves in this paper to qubit codes, the fact that our main tool, the Structure Theorem of~\cite{CHKNN2025}, is not constrained by this suggests that a generalization to qudit codes should be straightforward.
A more interesting generalization is the integration of EA codes into the OAQEC framework~\cite{BKK2007mar,BKK2007oct,DKV2024}, combining shared entanglement with subsystem~\cite{KLP2005,Pou2005} and hybrid quantum-classical codes~\cite{GLZ2017,NK2018,NK2021,NK2022}.
Recent work down this path in the stabilizer framework has been done by Nadkarni et al.~\cite{NADKV2024}.

While the results in this paper are quite general, the definition of the local unitary is quite unwieldy.
A natural question to ask is whether or not specific classes of EA codes admit a more compact description, similar to the EA stabilizer codes, which may be described using the generating set of the stabilizer group.
An obvious first choice would be EA XP stabilizer codes~\cite{WBB2022}, which similarly have stabilizer groups but with a more complicated structure.
Another option would be EA PI codes, as Aydin et al.~\cite{AAB2024} have given a concise description of 2-dimensional PI codes in terms of coefficient vectors describing the superpositions of Dicke states.

These results may additionally be useful in generalizing both Grassl's teleportation scheme~\cite{Gra2021} and the bipartite stabilizer codes of Wilde and Fattal~\cite{WF2010}, as well as connect to quantum codes over networks~\cite{YM2018,SYM2024}.
Another interesting direction is to reinterpret the results in Section~\ref{sec:reduceentang} in terms of prior work on quantum data compression, particularly through the Koashi-Imoto decomposition~\cite{KI2002,HJPW2004}, which has a form similar to the one given in the Structure Theorem of~\cite{CHKNN2025}.
As this work opens the possibility of EA codes with bipartite states other than Bell states, it remains to be seen if new upper bounds on codes where $C\neq2^c$ can be developed, extending the work of Lai et al.~\cite{LTY2024} on semidefinite programming bounds on EA CWS codes using the techniques of Munn{\'e} et al.~\cite{MNH2024}.

Finally, throughout this work we have drawn informal analogies between the act of erasing qubits meant for the receiver with the puncturing construction from classical coding theory (see~\cite[Section 1.5]{HP2003}).
Recently, Cao et al.~\cite{CCL2025} have formalized the connection between the puncturing and shortening of stabilizer codes and the cleaning lemma by means of quantum anticodes.
While these anticodes are currently restricted to the stabilizer framework, it would be interesting if a more general version can be used to formalize the informal notion of puncturing used in this paper, and whether this leads to a better understanding of EA codes outside the stabilizer framework.

\section*{Acknowledgments}

The authors would like to thank Eric Chitambar, Markus Grassl, and Felix Huber for helpful comments.

% \bibliographystyle{IEEEtran}
% \bibliography{bibliography}

% Generated by IEEEtran.bst, version: 1.14 (2015/08/26)

\end{document}